\documentclass[11pt,reqno]{amsart}

\usepackage{amsmath, amsthm, amscd, amsfonts, amssymb, graphicx, color, mathrsfs, extarrows}
\usepackage{mathscinet}
\usepackage[a4paper,left=3cm,right=3cm,top=2cm,bottom=4cm,bindingoffset=5mm]{geometry}
\usepackage{fancyhdr}
\usepackage{comment}
\usepackage{float}
\usepackage{diagbox}
\usepackage{eurosym}
\usepackage[utf8]{inputenc}
\usepackage{amsmath}
\usepackage{amsfonts}
\usepackage{version}
\usepackage{mathtools}
\usepackage{amscd}
\usepackage{amssymb}
\usepackage{dsfont}
\usepackage{amsthm}
\usepackage{thmtools}
\usepackage{graphicx}
\usepackage{mathrsfs}
\usepackage{todonotes}
\usepackage{xcolor}
\usepackage{todonotes}
\usepackage{setspace}
\usepackage{enumerate}
\usepackage[english]{babel}
\usepackage{xcolor}
\usepackage[colorlinks=true,linkcolor=blue]{hyperref}
\usepackage[all]{hypcap}
\usepackage{newtxmath}
\usepackage{pgfplots}
\usepackage{caption}
\pgfplotsset{compat=1.18}
 
\allowdisplaybreaks

\newtheorem{theorem}{Theorem}
\newtheorem{proposition}[theorem]{Proposition}
\newtheorem{corollary}[theorem]{Corollary}
\newtheorem{lemma}[theorem]{Lemma}

\theoremstyle{definition}

\newtheorem{remark}[theorem]{Remark}
\newtheorem{example}[theorem]{Example}
\newtheorem{definition}[theorem]{Definition}

\newcommand{\cB}{\mathcal{B}}
\newcommand{\cC}{\mathcal{C}}

\newcommand{\cF}{\mathcal{F}}

\newcommand{\cM}{\mathcal{M}}

\newcommand{\cP}{\mathcal{P}}

\newcommand{\cR}{\mathcal{R}}

\newcommand{\cW}{\mathcal{W}}
\newcommand{\cX}{\mathcal{X}}

\newcommand{\E}{\mathbb{E}}

\newcommand{\N}{\mathbb{N}}
\renewcommand{\P}{\mathbb{P}}

\newcommand{\R}{\mathbb{R}}

\newcommand{\al}{\alpha}

\newcommand{\de}{\delta}
\newcommand{\ep}{\varepsilon}

\newcommand{\ga}{\gamma}
\newcommand{\om}{\omega}

\newcommand{\Om}{\Omega}

\newcommand{\la}{\lambda}

\renewcommand{\d}{{\rm d}}
\newcommand{\cmp}{{\rm c}}
\newcommand{\eins}{\mathds{1}}

\newcommand{\deq}{\stackrel{\rm d}{=}}

\DeclareMathOperator{\VaR}{\rm VaR}
\DeclareMathOperator{\TVaR}{\rm TVaR}

\DeclareMathOperator{\esssup}{ess\,sup}

\DeclareMathOperator{\spear}{sp}

\DeclareMathOperator{\var}{var}
\DeclareMathOperator{\cor}{cor}
\DeclareMathOperator{\cpl}{cpl}

\DeclareMathOperator{\pr}{pr}

\title[Hidden Dependence and Aggregate Tail Risk]
{Hidden Dependence and Aggregate Tail Risk}

\author{Corrado De Vecchi}
\address{Department of Economics, University of Verona, Italy}
\email{corrado.devecchi@univr.it}

\author{Max Nendel}
\address{Department of Statistics and Actuarial Science, University of Waterloo, Canada}
\email{mnendel@uwaterloo.ca}

\author{Steven Vanduffel}
\address{Solvay Business School, Vrije Universiteit Brussel, Belgium}
\email{steven.vanduffel@vub.be}

\thanks{The first and second named authors are grateful for funding from the Natural Sciences and Engineering Research Council of
Canada via Discovery Grant no.\ RGPIN-2025-04219.\ The last-named author acknowledges funding from the Research Foundation -- Flanders (FWO) through Research Grant No.\ G093024N}

\date{\today}

\begin{document}

\begin{abstract}
We study risk aggregation problems for arbitrary non-decreasing aggregation functions and tail risk measures under dependence uncertainty in a distributionally robust setting.\ To this end, we introduce the notion of hidden dependence for random vectors, which is built on the concepts of risk concentration and common tail events developed in Wang and Zitikis (2020).\
 We show that, starting from a tail event $A$ of the aggregate loss for an arbitrary random vector $Y$, one can construct a random vector with hidden dependence that dominates $Y$ on the tail event $A$.\ We then focus on the case in which model uncertainty is described by small perturbations of the distribution of a random vector with respect to a suitable probability distance without changing the marginals.\ We show that these perturbations of the reference distribution are compatible with hidden dependence and thus lead to the same worst-case risk bounds as in the unconstrained case for arbitrary $\ga$-tail risk measures with a suitable level $\ga$.\ Finally, we apply our results in a credit risk context and quantify the potential underestimation of portfolio risk arising from uncertainty in the dependence structure. In particular, we show that even small deviations from a reference Gaussian dependence model can, in principle, justify dramatic increases in capital requirements.

\smallskip
\noindent {\it Keywords:} Distributional uncertainty, risk aggregation, copula, hidden dependence, tail risk measure, value at risk, expectile

\smallskip
\noindent{\it JEL Classification:} C60; G22; G28; G32
\end{abstract}

\maketitle

\section{Introduction}

Regulators require financial institutions and insurers to hold sufficient capital to protect depositors and policyholders and to support financial stability.\ These capital requirements are typically based on estimates of aggregate risk, obtained by combining the risks of individual positions, business lines, or entities.\ Their reliability therefore depends not only on the marginal behaviour of these risks, but also on the way their dependence is modelled.\ This is particularly important for risk measures, such as Value-at-Risk (VaR) and Tail Value-at-Risk (TVaR), which play a central role in capital requirements and solvency assessments.\ Since these measures focus on rare losses, they can be highly sensitive to modelling assumptions, especially those concerning dependence, see, for example, \cite{BRV2022, Embrechts2002}.\ This concern is explicitly recognized by regulators and actuarial bodies. For instance, the Actuarial Association of Europe (2017) emphasizes that ``model risk cannot be disregarded,'' since many models may be consistent with the available data and the final model choice is therefore partly subjective, cf.\ \cite{AAE2017}. Similarly, the Basel Committee on Banking Supervision (2019) encourages banks to provide evidence on the uncertainty surrounding capital requirement models by identifying key assumptions and estimating uncertainty bounds, cf. \cite{BCBS2019MarketRisk}.

Although institutions can often estimate the marginal distributions of individual losses with reasonable accuracy, characterizing the dependence structure between these losses is substantially more difficult.\ A standard approach to risk aggregation relies on correlation-based methods, which often implicitly assume Gaussian dependence structures.\ These methods are appealing because of their analytical tractability and simplicity, but they are unable to capture more complex forms of dependence, especially during extreme events, potentially leading to a severe underestimation of tail risk, cf.\ \cite{Embrechts2002}.\ Copula-based approaches offer considerably greater flexibility and can model dependence patterns that better reflect empirical observations, see, e.g., \cite{Czado2010}.\ However, choosing and calibrating an appropriate copula remains challenging, and reliable estimation typically requires rich datasets that are rarely available in financial and insurance applications.\ For instance, assessing the risk of a loan portfolio would require data containing observations of multiple simultaneous defaults, yet such events occur infrequently.\ As a result, although substantial information on individual risks may be available, uncertainty about their interdependence usually leads to considerable uncertainty in estimates of the aggregate risk, cf.\ \cite{BRV2022}.\  
These difficulties and their importance for society have motivated a growing literature on risk aggregation under dependence uncertainty.

The classical setting assumes fixed marginal distributions while leaving the dependence structure completely unspecified.\ As a consequence, the resulting model risk --- often quantified as the difference between the worst-case value and the value obtained under a benchmark dependence model for risk measures such as VaR or TVaR --- can be substantial, see, e.g., \cite{EmbrechtsWangWang2015, PuccettiRueschendorf2012b}.\ As an  illustration, under dependence uncertainty, the worst-case VaR of an aggregate portfolio is at least as large as the sum of the individual VaRs and may approach the sum of the individual TVaR values.\ Consequently, in the absence of dependence information, estimates of capital requirements may fail to reflect the actual risk and may not provide a sufficiently informative assessment of aggregate risk.

More recent contributions therefore focus on incorporating partial dependence information, for example, through copula constraints, factor structures, variance information, distance or correlation constraints, see, among others, \cite{BernardVanduffel2015ModelRisk, Bignozzietal2015, PuccettiRueschendorfManko2016, BernardRueschendorfVanduffelWang2017, LiShaoWangYang2018, LuxPapapantoleon2019, fontana2021model, bernard2023impact}.\ This paper contributes to this literature by studying worst-case tail risks under uncertain dependence.\ Our starting point is the notion of a tail risk measure introduced in \cite{liu2021theory}, which formalizes risk measures that depend only on the part of a distribution beyond a specified quantile threshold.\ This class comprises prominent regulatory risk measures such as VaR and TVaR, and provides a natural framework for aggregation problems in which only extreme losses are of interest.\ We consider arbitrary non-decreasing aggregation functions and characterize worst-case values of such tail risk measures when marginal distributions are fixed but the dependence structure remains uncertain.

Central to this paper is the notion of \emph{hidden dependence}, which we introduce.\ Hidden dependence describes a situation in which a given copula properly captures the dependence among the random variables in a random vector, provided that their largest realizations are not observed.\ However, hidden dependence provides no information about the dependence structure, conditional on an adverse scenario described by a common tail event, that is, when  large realizations are observed simultaneously with probability $1-\gamma$ for $\gamma \in (0,1)$. 
Our first contribution is to show how, starting from a tail event of the aggregate loss for an arbitrary random vector, one can construct another random vector with hidden dependence that almost surely dominates it on that tail event. As a consequence, for the purpose of maximizing a $\gamma$-tail risk measure, the search over all admissible dependence structures can be reduced to dependence structures exhibiting hidden dependence.

We then study ambiguity sets generated by small perturbations of a reference distribution, measured by a consistent probability distance.\ Our second contribution is to show that such perturbations are compatible with hidden dependence.\ This implies that, even when admissible joint distributions are required to remain close to a reference model, the resulting worst-case tail risk bounds may coincide with those obtained under full dependence uncertainty.\ In particular, however small the allowed perturbation around a given reference model may be, there exists a level $\gamma \in (0,1)$ such that the worst-case value of the $\gamma$-tail risk measure is unaffected by the distance constraint. Moreover, through the connection between consistent probability distances and the broad class of regular dependence measures considered in \cite{CorradoMaxJan}, this conclusion extends to ambiguity sets described in terms of dependence measures. Hence, the same worst-case bounds may arise not only under small distributional perturbations of a reference model, but also when partial dependence information is imposed through a broad class of dependence constraints.

This shows that dependence information does not necessarily improve risk bounds when tail risk measures are evaluated at high confidence levels. In other words, even if the multivariate distribution is known almost perfectly, this information may become irrelevant in the far tail:\ the worst-case value may depend only on the marginal distributions and thus coincides with the unconstrained bound.\ In such cases, existing results and techniques from the literature can be applied directly to compute the unconstrained bounds.\ For instance, for a coherent tail risk measure, the upper bound is attained by a copula exhibiting co-monotonicity in the tail. One may argue that such extreme dependence structures are unrealistic, and that this result merely reflects an insufficient specification of dependence information. However, in the absence of sufficient tail data, it is appropriate to leave open a broad range of possible tail behaviours. Moreover, empirical evidence suggests that extreme co-movement among risks does occur; for various illustrations we refer, among others, to \cite{DasUppal2004, ContKokholm2013, ContWagalath2016}.

A practical implementation of our second main result requires understanding the relationship between the radius $\ep$ of the ambiguity ball around the reference distribution and the tail level $\gamma$. Our third contribution is to establish, for $p$-Wasserstein distances with $p\in[1,\infty)$, how to compute, for a given tail level $\gamma$, a minimum value for the radius of the ambiguity ball at which the resulting constrained bound coincides with the unconstrained bound, and vice versa.\ As an application, we show that, even when default probabilities are known and only small perturbations around a reference Gaussian copula are acceptable, model risk can be devastating:\ the worst-case VaR may be as large as the total exposure.

A framework similar to ours is that of \cite{EKP2020}, where also fixed marginal distributions and ambiguity at the level of dependence is considered.\ However, their work focuses on the numerical computation of worst-case expectations for aggregate losses, while we provide structural  conditions, based on hidden dependence, for worst-case tail risk measures.

Our results are also closely related to \cite{CorradoMaxJan}, who show that an arbitrarily small amount of positive dependence, measured through a broad class of dependence measures, can be explained by independence of the risks unless a tail event occurs; in that case, the risks become co-monotonic, leading to tail risk levels that match those under co-monotonicity. We show that adding dependence information, expressed via dependence constraints or through distributional perturbations around a reference model, remains insufficient to control aggregate tail risk and that the worst-case risk is attainable.

Overall, the present paper shows that tail risk can be highly sensitive to forms of dependence that are difficult to detect from the body of the distribution or from global dependence summaries.\ This reinforces the need for caution when interpreting risk estimates based on fitted multivariate models, especially when the quantity of interest is a high-level tail functional.\ More broadly, our findings demonstrate that seemingly mild uncertainty concerning dependence structures may suffice to recover worst-case tail scenarios, highlighting the persistent vulnerability of tail risk assessment to hidden forms of dependence.

The rest of the paper is organized as follows.\ Section 2 introduces the notation and recalls the notion of a tail risk measure.\ Section 3 develops the main theoretical framework.\ We introduce hidden dependence, show how it characterizes worst-case tail risk under dependence uncertainty, and prove that distance constraints around a reference dependence model may fail to reduce the worst-case bound. We also discuss how these results apply when dependence information is expressed through dependence measures rather than probability distances.\ Section 4 considers risk measures that are not necessarily tail risk measures, focusing in particular on Value-at-Risk and expectiles. Section 5 provides quantitative implications of the theory.\ We derive explicit bounds for $p$-Wasserstein distances with $p\in[1,\infty)$ and for several prominent choices of dependence measures.\ Finally, in Section 6, we apply our results to credit risk portfolios and demonstrate that even small deviations from a reference Gaussian dependence model can lead to substantial increases in capital requirements.

\section{Setup and Notation}

Throughout, let $n\in \N$ and $(\Om,\cF,\P)$ be an atomless probability space.\ Recall that a probability space is atomless if and only if it supports a uniformly distributed random variable, cf.\ \cite[Proposition A.31]{foellmer2016stochastic}.\ In what follows, $\cP(\R)$ and $\cP(\R^n)$ denote the set of all Borel probability measures on $\R$ and $\R^n$, respectively.

For a random variable $Z\sim \mu\in \cP(\R)$, we write $F_Z:=F_\mu$ for its cumulative distribution function and define
\[
   \VaR^\al (Z):=F_Z^{-1}(\al):= \inf \big\{ a\in \R\,\big|\,\P(Z>a)\leq 1-\al\big\}
\]
as the left-continuous version of the \textit{Value-at-Risk} (VaR) at level $\alpha\in(0,1)$.\ 

Furthermore, the \textit{Tail Value-at-Risk} (TVaR) at level $\alpha\in(0,1)$ with $\E[|Z|]<\infty$ is defined as
\[\text{TVaR}^{\alpha}(Z)=\frac{1}{1-\alpha}\int_\alpha^1F^{-1}_{Z}(u)\d u.\]  

For a random vector $X=(X_1,\ldots, X_n)$, we also use the notation
\[
F_X^{-1}(u):=\big(F_{X_1}^{-1}(u_1),\ldots, F_{X_n}^{-1}(u_n)\big), \quad\text{for }u=(u_1,\ldots, u_n)\in (0,1)^n,
\] 
and $\P_X\in \cP(\R^n)$ for its $n$-variate distribution, given by
\[
\P_X(B):=\P(X\in B)\quad \text{for all }B\in \cB(\R^n),
\]
where $\cB(\R^n)$ denotes the Borel $\sigma$-algebra on $\R^n$.\ As usual, a copula $c\colon [0,1]^n\to [0,1]$ is the cumulative distribution function of a random vector $U=(U_1,\ldots, U_n)$ with $U_i$ uniform on $(0,1)$ for $i=1,\ldots, n$.\

We briefly recall the notion of a tail risk measure, introduced in \cite{liu2021theory}.\ To that end, let $\cX$ be a nonempty set of random variables and $\ga\in (0,1)$.\ For any random variable $Z\in \cX$, let $U_Z$ be a uniformly distributed random variable with $Z=F_Z^{-1}(U_Z)$ $\P$-almost surely.\ The existence of such a random variable follows, for example, from \cite[Lemma A.32]{foellmer2016stochastic}, recalling that $(\Om,\cF,\P)$ is assumed to be atomless.
\[
Z^\ga:=F_Z^{-1}(\ga+(1-\ga)U_Z)\quad\text{for }\ga\in (0,1).
\]
We point out that, a priori, the definition of $Z^\ga$ also depends on the choice of the uniformly distributed random variable $U_Z$. However, passing from $U_Z$ to another uniformly distributed random variable, say $V$, it follows that
\begin{equation}\label{eq.uniqueness.tail.risk}
 Z^\ga\deq F_Z^{-1}(\ga+(1-\ga)V) \quad\text{for all }\ga\in (0,1),
\end{equation}
 i.e., $Z^\ga$ is unique up to equality in distribution.\
Following \cite[Definition 1]{liu2021theory}, for $\ga\in (0,1)$, we say that a map $R\colon \cX\to \R$ is a \textit{$\ga$-tail risk measure} if it is monotone, i.e.,
\[
R(Z_1)\leq R(Z_2)\quad \text{for all } Z_1,Z_2\in \cX\text{ with }Z_1\leq Z_2 \;\P\text{-a.s.},
\]
and $R(Z_1)=R(Z_2)$ for all $Z_1,Z_2\in \cX$ with $Z_1^\ga\deq Z_2^\ga$.\ We refer to \cite{liu2021theory} for a more detailed discussion of tail risk measures.

 For $\mu_1,\ldots,\mu_n\in \cP(\R)$,
$\cpl(\mu_1,\ldots,\mu_n)$ denotes the set of all multi-marginal couplings, i.e., the set of all $\pi\in \cP(\R^n)$ with
$$
\pi\circ\pr_i^{-1}=\mu_i\quad \text{for all }i=1,\ldots, n,
$$
where $\pr_i\colon \R^n\to \R$ refers to the $i$-th coordinate projection
 for $i=1,\ldots, n$.\ For a nonempty set $\cM\subset \cP(\R)$, we define
$$
\cpl^n(\cM):=\bigcup_{\mu_1,\ldots, \mu_n\in \cM} \cpl(\mu_1,\ldots, \mu_n).
$$ 
Throughout this paper, we consider aggregation functions $f\colon \R^n \to \R$, which are assumed to be (Borel) measurable and non-decreasing, i.e.,
\[
   f(x)\leq f(y)\quad \text{for all }x,y\in \R^n\text{ with }x \leq y
\]
and, occasionally, also left-continuous, i.e., $f(x)=\lim_{y\uparrow x}f(y)$ for all $x\in \R^n$.\

\section{Risk aggregation under dependence uncertainty}\label{sec.risk-aggregation}

In this section, we study a general class of risk aggregation problems of the form
  \begin{equation}\label{eq: mainprob}
      \begin{aligned}
\sup \quad & R\Big( f(X)\Big) \\
\textrm{s.t.} \quad & X_i\sim \mu_i, \textbf{ }i=1,\ldots,n,  \\
& \P_X \in \cC,
\end{aligned} 
  \end{equation}
where $R$ is a tail risk measure, $f\colon\R^n\to \R$ is a non-decreasing measurable aggregation function, $\mu_1,\ldots, \mu_n\in \cP(\R)$ are one-dimensional marginals, and $\cC$ is a set of $n$-variate distributions that are deemed feasible and that are compatible with the marginals $\mu_1,\ldots, \mu_n$.

A key point of our analysis is to show that a specific property of the risks' dependence structure, which we call $(c,\ga)$-hidden dependence, plays a central role for the description of the risk bounds in \eqref{eq: mainprob} in terms of a risk aggregation problem with full dependence uncertainty, i.e., without additional constraints in terms of a set $\cC$, which corresponds to the case $\cC=\cP(\R^n)$.

\subsection{Risk Aggregation with Hidden Dependence}\label{sec.comonotonic}

Before introducing the concept of $(c,\ga)$-hidden dependence, we recall the notion of tail events and common tail events introduced in \cite{WangZitikis2020}. 

\begin{definition}
Let $\ga\in (0,1)$.
\begin{enumerate}
\item[a)] An event $A\in \cF$ is called a \textit{$\ga$-tail event} for a random variable $Z$ if $\P(A)=1-\ga$ and
\begin{equation}\label{eq.ga-tailevent}
 Z(\om)\geq \sup Z(A^\cmp)\quad \text{for all }\om \in A.
 \end{equation}
\item[b)] An event $A\in \cF$ is called a \textit{common $\ga$-tail event} for a random vector $X=(X_1,\ldots, X_n)$ if it is a $\ga$-tail event for $X_1,\ldots, X_n$.
 \end{enumerate}
\end{definition}
Intuitively, a $\ga$-tail event for the random loss $Z$ is an event occurring with probability $1-\ga$ in which the loss $Z$ takes its highest values. Considering a financial portfolio with random losses $X_1,\ldots, X_n$, a $\ga$-tail event is an event in which all the losses simultaneously assume their highest values, and again this event has a probability equal to $1-\ga$.\ 
Finally, we recall that \cite{WangZitikis2020} showed that a random variable $Z$ always admits a $\ga$-tail event.

\begin{remark}\label{rem: tailriskandevents}
   Let $\ga\in (0,1)$.\ Then, by {Lemma \ref{lem: tailevent} c)}, for a given random variable $Z$ with a $\ga$-tail event $A\in \mathcal{F}$, the equality $\P(Z\leq a\vert A) =\P(Z^\ga \leq a)$ holds for any $a\in \R$. Hence, any $\ga$-tail risk measure only depends on the distribution of $Z$ conditioned on a $\ga$-tail event.
\end{remark}
We are now ready to state the definition of $(c,\ga)$-hidden dependence for a given random vector.
\begin{definition}\label{def.hiddendep}
Let $c\colon [0,1]^n\to [0,1]$ be a copula and $\ga\in (0,1)$. Then, we say that a random vector $X=(X_1,\ldots, X_n)$ has \textit{$(c,\ga)$-hidden dependence} if there exists a common $\ga$-tail event $A\in \cF$ such that
 \begin{equation}\label{eq.def.gamma-comonotonic}
 \P(X\leq a\,|\, A^\cmp)=c\big(\P(X_1\leq a_1\,|\,A^\cmp),\ldots, \P(X_n\leq a_n\,|\,A^\cmp)\big)
 \end{equation}
 for all $a=(a_1,\ldots,a_n)\in \R^n$.
\end{definition}
By saying that a random vector has a $(c,\ga)$-hidden dependence, we are essentially specifying the dependence structure of $X$ conditionally on $A^\cmp$, i.e., the complement of the common $\ga$-tail event $A$.\ As for an interpretation, the assumption of $(c,\ga)$-hidden dependence states that the copula $c$ properly captures the dependence among the random variables in $X$, assuming that we do not observe their largest realizations. On the other hand, $(c,\ga)$-hidden dependence does not provide information regarding the dependence structure conditional on the adverse scenario described by the common $\ga$-tail event $A.$
The following lemma offers three equivalent formulations of $(c,\ga)$-hidden dependence that shed light on this assumption.

\begin{lemma}\label{lem:char-hidden}
Let $c\colon [0,1]^n\to [0,1]$ be a copula, $\ga\in (0,1)$, and $X=(X_1,\ldots, X_n)$ be a random vector.\ Then, the following statements are equivalent.
\begin{enumerate}
\item[(i)] $X$ has $(c,\ga)$-hidden dependence.
\item[(ii)] There exist $A\in \mathcal F$ with $\P(A)=1-\ga$ and a copula $c^{\rm tail}$ such that
 \begin{equation}\label{eq: rapgen}
 X_i\overset{\d}{=}F_{X_i}^{-1}\Big(\gamma U_i \mathds{1}_{A^\cmp} + \big(\ga + (1-\ga)V_i\big)\mathds{1}_{A}\Big)\quad \text{for }i=1,\ldots, n,
 \end{equation}
  where $(U_1,\ldots ,U_n)\sim c$, conditionally on $A^\cmp$, and $(V_1,\ldots, V_n)\sim c^{\rm tail}$, conditionally on $A$.
 \item[(iii)] There exists a copula $c^{\rm tail}$ such that $c_X$, given by
 \begin{equation}\label{eq.copula.rep}
    c_X(u)=\begin{cases} \ga+ (1-\ga)c^{\rm tail}\Big(\tfrac{u_1-\ga}{1-\ga},\ldots, \tfrac{u_n-\ga}{1-\ga}\Big), & \text{for }u\in (\ga,1)^n,\\
    \ga c\Big(\tfrac{u_1\wedge \ga}\ga, \ldots, \tfrac{u_n\wedge \ga}\ga\Big),& \text{for } u\in (0,1)^n\setminus (\ga,1)^n,\end{cases}
 \end{equation}
 is a copula of $X$.
 \item[(iv)] There exist $A\in \mathcal F$ with $\P(A)=1-\ga$ and a copula $c^{\rm tail}$ such that \eqref{eq: rapgen} holds
  for all $(U_1,\ldots ,U_n)\sim c$, $(V_1,\ldots, V_n)\sim c^{\rm tail}$ independent of $A$.
 \end{enumerate}
\end{lemma}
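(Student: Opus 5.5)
I would prove the cycle (i)$\Rightarrow$(iv)$\Rightarrow$(ii)$\Rightarrow$(iii)$\Rightarrow$(i). The tool throughout is Lemma \ref{lem: tailevent}, taken in the form: if $A$ is a $\ga$-tail event for $X_i$, then conditionally on $A^\cmp$ the variable $X_i$ has quantile function $u\mapsto F_{X_i}^{-1}(\ga u)$, and conditionally on $A$ it has quantile function $u\mapsto F_{X_i}^{-1}(\ga+(1-\ga)u)$. Equivalently, $\P(X_i\le a\,|\,A^\cmp)=\min\{F_{X_i}(a)/\ga,1\}$ for $a<\sup X_i(A^\cmp)$.

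\emph{(i)$\Rightarrow$(iv).} Let $A$ be the common $\ga$-tail event from Definition \ref{def.hiddendep}. Define $c^{\rm tail}$ as a copula of the conditional law of $X$ given $A$; by Sklar's theorem it exists, possibly non-unique. Now take any $(U,V)$ with $U\sim c$ and $V\sim c^{\rm tail}$, independent of $A$, and set $Y_i:=F_{X_i}^{-1}\big(\ga U_i\eins_{A^\cmp}+(\ga+(1-\ga)V_i)\eins_A\big)$.
\begin{itemize}
\item On $A^\cmp$, the law of $Y$ has copula $c$ and marginal quantile functions $F_{X_i}^{-1}(\ga\,\cdot)$. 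By (i) and the tail-event lemma, this is exactly the conditional law of $X$ given $A^\cmp$.
\item On $A$, the law of $Y$ has copula $c^{\rm tail}$ and marginals $F_{X_i}^{-1}(\ga+(1-\ga)\,\cdot)$, which is the conditional law of $X$ given $A$.
\end{itemize}
Mixing the two pieces with weights $\ga$ and $1-\ga$ gives $Y\deq X$ jointly, which is stronger than the componentwise equality in \eqref{eq: rapgen}. I read \eqref{eq: rapgen} as a joint equality in distribution, since otherwise the statement is trivial. The step (iv)$\Rightarrow$(ii) is immediate, because on an atomless space such $U,V$ independent of $A$ exist.

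\emph{(ii)$\Rightarrow$(iii).} Put $W_i:=\ga U_i\eins_{A^\cmp}+(\ga+(1-\ga)V_i)\eins_A$. Each $W_i$ is uniform on $(0,1)$, since $\P(A)=1-\ga$ and the conditional laws are uniform on $(0,\ga)$ and $(\ga,1)$. Hence $X\deq F_X^{-1}(W)$, so the distribution function $c_X$ of $W$ is a copula of $X$. Computing it:
\begin{itemize}
\item For $u\in(\ga,1)^n$, the event $\{W\le u\}$ contains all of $A^\cmp$, which contributes $\ga$, plus the part of $A$ where $V\le (u-\ga)/(1-\ga)$, which contributes $(1-\ga)c^{\rm tail}((u-\ga)/(1-\ga))$.
\item If some $u_i\le\ga$, then $\{W\le u\}\subset A^\cmp$, and on $A^\cmp$ we have $W_j\le u_j$ iff $U_j\le (u_j\wedge\ga)/\ga$. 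This gives $\ga c\big((u\wedge\ga)/\ga\big)$.
\end{itemize}
Together these give \eqref{eq.copula.rep}.

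\emph{(iii)$\Rightarrow$(i).} On the atomless space choose a uniform random vector $W$ with $X=F_X^{-1}(W)$ a.s. and joint law $c_X$. This is a multivariate version of \cite[Lemma A.32]{foellmer2016stochastic}; one can also transfer the statement to a realization and argue in distribution, since (i) is a distributional property once $A$ is allowed to be chosen. Set $A:=\{W_1>\ga\}$. The structure of $c_X$ implies $\{W_1>\ga\}=\{W_i>\ga\}$ a.s. for every $i$, because $c_X(u)=\ga$ whenever $u_i=\ga$ and all other coordinates are $1$. Monotonicity of $F_{X_i}^{-1}$ then makes $A$ a common $\ga$-tail event. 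Conditionally on $A^\cmp$, the vector $W/\ga$ has law $c$, and $X_i=F_{X_i}^{-1}(\ga\cdot W_i/\ga)$. Therefore $\P(X\le a\,|\,A^\cmp)=c\big(\P(X_1\le a_1|A^\cmp),\ldots\big)$, using the identity $\{F^{-1}(\ga s)\le a\}=\{s\le \P(X_i\le a|A^\cmp)\}$.

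The main obstacle is this last identity together with the multivariate representation $X=F_X^{-1}(W)$. When marginals have atoms, the copula is non-unique and the conditional marginal distribution functions are not continuous. I would handle this by working with the generalized inverse and checking $\{F_{X_i}^{-1}(\ga s)\le a\}=\{\ga s\le F_{X_i}(a)\}$, which holds for left-continuous quantiles. If needed, I would obtain $W$ through the standard distributional transform $F(X-)+V(F(X)-F(X-))$ with a randomization chosen to produce joint law $c_X$.
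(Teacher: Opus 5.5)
Your legs (i)$\Rightarrow$(iv)$\Rightarrow$(ii)$\Rightarrow$(iii) are sound, and your cycle is ordered differently from the paper's ((i)$\Rightarrow$(ii)$\Rightarrow$(iii)$\Rightarrow$(iv)$\Rightarrow$(i)). Your (ii)$\Rightarrow$(iii) is the paper's computation. Your (i)$\Rightarrow$(iv) compares the conditional laws on $A^\cmp$ and $A$ via Lemma \ref{lem: tailevent} and Sklar's theorem. It is more careful than the paper's (i)$\Rightarrow$(ii), which simply asserts an a.s.\ representation $X_i=F_{X_i}^{-1}(\ga U_i)$ on $A^\cmp$ with $U\sim c$.

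The gap is in (iii)$\Rightarrow$(i). On a general atomless space there is no multivariate quantile representation with a \emph{prescribed} copula. Take $\Om=(0,1)$ with its Borel $\sigma$-algebra and Lebesgue measure, $n=2$, $\ga=\frac12$, $X_1(\om)=\om$ and $X_2(\om)=\om\eins_{\{\om>1/2\}}$. One checks that (iii) holds with $c$ the independence copula and $c^{\rm tail}$ co-monotonic. However, $X_1=F_{X_1}^{-1}(W_1)$ forces $W_1=\om$, so $W_2$ must be a function of $W_1$. On the other hand, $c_X$ requires $W_2$ to be non-degenerate and independent of $W_1$ on $\{W_1\le\frac12\}$, so no such $W$ exists. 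Your distributional-transform fallback fails for the same reason: here $\sigma(X)=\cF$, so no uniform independent of $X$ exists. Moreover, one transform per coordinate produces one specific copula, not an arbitrary $c_X$. Yet (i) does hold in this example, with $A=\{\om>\frac12\}$.

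What is missing is a proof of the claim you mention only in passing, that (i) depends only on the law of $X$. That is the real content of this step, and it holds (up to null sets) because of the tail structure. Put $q_i:=F_{X_i}^{-1}(\ga)$ and $q:=(q_1,\ldots,q_n)$. Lemma \ref{lem: tailevent} gives $\bigcup_i\{X_i>q_i\}\subseteq A\subseteq\{X\ge q\}$ a.s.\ for every common $\ga$-tail event $A$. So $A$ is fixed except for a piece of $\{X=q\}$, on which $X$ is constant.

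This suggests the transfer argument:
\begin{enumerate}
\item[1.] Realize $W'\sim c_X$, and set $X':=F_X^{-1}(W')$ and $A':=\{W_1'>\ga\}$. Your computation shows that $A'$ is a common tail event for $X'$ with the required conditional law.
\item[2.] On $\Om$ put $B:=\bigcup_i\{X_i>q_i\}$ and $A:=B\cup A_0$, with $A_0\subseteq\{X=q\}$ of mass $1-\ga-\P(B)$.
\item[3.] Such an $A_0$ exists by atomlessness, since $\P(B)=\P(B')\le 1-\ga$ and $\P(X=q)=\P(X'=q)\ge \P(A'\setminus B')$, where $B'$ is the analogue of $B$ for $X'$.
\item[4.] Then $(X,\eins_A)\deq(X',\eins_{A'})$, which gives (i).
\end{enumerate}

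The paper faces the same transfer issue in its (iv)$\Rightarrow$(i). There it computes $\P(X\le a\,|\,A^\cmp)$ as if $X$ coincided with its representation, although in (iii)$\Rightarrow$(iv) the event $A$ was arbitrary. This argument patches that step as well.

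A minor point: $\{W_1>\ga\}=\{W_i>\ga\}$ a.s.\ needs $c_X(u)=\ga$ for $u_i=u_j=\ga$ with all other coordinates equal to $1$. With only $u_i=\ga$ you merely recover uniformity of $W_i$.
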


\begin{proof}
 We start with the implication (i)$\,\Rightarrow\,$(ii). To that end, assume that $X$ has $(c,\ga)$-hidden dependence.\ Then, by \eqref{eq.cond.var} below, there exist random variables $(U_1,\ldots, U_n)$ with $(U_1,\ldots, U_n)\sim c$, conditionally on $A^\cmp$, and
 \[
 X_i= F_{X_i}^{-1}(\ga U_i)\quad \P\text{-a.s., conditionally on }A^\cmp,\text{ for } i=1,\ldots, n.
 \]
 Moreover, by \eqref{eq.cond.var} below, there exist random variables $V_1,\ldots, V_n$ which are uniform on $(0,1)$, conditionally on $A$, such that
 \[
 X_i= F_{X_i}^{-1}\big(\ga +(1-\ga)V_i\big)\quad \P\text{-a.s., conditionally on }A,\text{ for }i=1,\ldots, n.
 \]
 This proves \eqref{eq: rapgen}.\ Now, the implication (ii)$\,\Rightarrow\,$(iii) is a direct consequence of \eqref{eq: rapgen} together with \eqref{eq.cond.distribution} and the implication (iii)$\,\Rightarrow\,$(iv) is trivial.\ The remaining implication (iv)$\,\Rightarrow\,$(i) follows from the equality
 \begin{align*}
       \P\big(X \leq a \vert A^\cmp\big)&= \P\Bigg( \bigcap_{i=1}^{n} \Big\{F^{-1}_{X_i}\big( \ga U_i^c\big)\leq a_i\Big\} \,\Bigg \vert \, A^\cmp \Bigg) = \P\Bigg( \bigcap_{i=1}^{n} \Bigg\{   U_i^c \leq \frac{F_{X_i}(a_i)\wedge \ga}{\ga}\Bigg\}\, \Bigg \vert \,A^\cmp \Bigg)  \\ 
       &= c\Bigg ( \frac{F_{X_1}(a_1)\wedge \ga}{\ga} ,\ldots, \frac{F_{X_n}(a_n)\wedge \ga}{\ga} \Bigg)\\
   &= c\big(\P(X_1\leq a_1\,|\, A^\cmp),\ldots, \P(X_n\leq a_n\,|\, A^\cmp)\big)
   \end{align*} 
   for all $a\in \R^n$, where, in the last step, we used \eqref{eq.cond.distribution}.
\end{proof}

The following remark shows that common $\ga$-tail events for a random vector $X$ are $\ga$-tail events for the aggregate position $f(X)$ for any non-decreasing aggregation function $f\colon \R^n\to \R$.

\begin{remark}\label{rem.tailevent.aggregate}
 Let $X=(X_1,\ldots, X_n)$ be a random vector, $\ga\in (0,1)$, $A\in \cF$ be a common $\ga$-tail event for $X$, and $f\colon \R^n\to \R$ be non-decreasing and measurable.\ Then, $A$ is a $\ga$-tail event for $f(X)$.\ Indeed, for all $\om \in A$,
 \[
 f\big(X(\om)\big)\geq f\big(\sup X_1(A^\cmp),\ldots, \sup X_n(A^\cmp)\big)\geq \sup [f(X)](A^\cmp).
 \]
\end{remark}

The following proposition forms the backbone of the theoretical results presented in this paper.\ In terms of random losses, it states that one can start with a random vector $Y=(Y_1,\ldots, Y_n)$, fix $\gamma \in (0,1)$ and a copula $c\colon [0,1]^n\to[0,1]$, and construct a random vector $X=(X_1,\ldots, X_n)$ with the same marginal distributions as $Y$ and $(c,\ga)$-hidden dependence such that each component of $X$ dominates the corresponding component of $Y$ almost surely on an event $A\in \cF$, which can be freely chosen among those with probability $1-\ga$.

\begin{proposition}\label{th: cgammadominance}
    Let $c\colon [0,1]^n\to [0,1]$ be a copula, $\ga\in (0,1)$, $Y=(Y_1,\ldots, Y_n)$ be a random vector, and $A\in \cF$ with $\P(A)=1-\ga$.\ Then, there exists a random vector $X=(X_1,\ldots, X_n)$ with $(c,\ga)$-hidden dependence, $X_i\overset{\d}{=}Y_i$ for $i=1,\ldots, n$, common $\ga$-tail event $A$, and
    \[
     X_i\geq Y_i\quad  \P\text{-a.s.\ on }A \quad \text{for }i=1,\ldots, n.
    \]
\end{proposition}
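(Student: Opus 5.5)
We will build $X$ directly through the representation in Lemma~\ref{lem:char-hidden}(ii), choosing the tail copula so that $X$ dominates $Y$ on $A$. Throughout we use that $(\Om,\cF,\P)$ is atomless, so the conditional spaces $(\Om,\cF,\P(\cdot\,|\,A))$ and $(\Om,\cF,\P(\cdot\,|\,A^\cmp))$ are atomless as well (we may assume $\P(A^\cmp)=\ga>0$ and $\P(A)>0$).

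\emph{Construction on $A^\cmp$.} Since $\P(\cdot\,|\,A^\cmp)$ is atomless, it supports a random vector $(U_1,\ldots,U_n)$ with distribution $c$. We define
\[
X_i:=F_{Y_i}^{-1}(\ga U_i)\quad\text{on }A^\cmp,\qquad i=1,\ldots,n.
\]

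\emph{Construction on $A$.} Under $\P(\cdot\,|\,A)$, for each $i$ let $G_i$ denote the conditional distribution function of $Y_i$ given $A$. By \cite[Lemma A.32]{foellmer2016stochastic}, applied on the atomless conditional space, there exist random variables $W_i$, uniform on $(0,1)$ under $\P(\cdot\,|\,A)$, with $Y_i=G_i^{-1}(W_i)$ a.s.\ on $A$. We define
\[
X_i:=F_{Y_i}^{-1}\big(\ga+(1-\ga)W_i\big)\quad\text{on }A,
\]
so that $(V_1,\ldots,V_n):=(W_1,\ldots,W_n)$ plays the role of the tail vector, with $c^{\rm tail}$ the copula of $W$ under $\P(\cdot\,|\,A)$.

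\emph{Marginals.} By construction, $X_i$ has conditional law $\mu_i(\cdot\,|\,X_i\le F^{-1}(\ga))$-type lower piece on $A^\cmp$ and the upper $(1-\ga)$-piece on $A$. Mixing with weights $\ga$ and $1-\ga$, we obtain that $F_{Y_i}^{-1}$ is applied to a variable that is uniform on $(0,1)$. Hence $X_i\deq Y_i$, exactly as in \eqref{eq: rapgen}.

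\emph{Common tail event.} On $A$ we have $X_i\ge F_{Y_i}^{-1}(\ga)$, while on $A^\cmp$ we have $X_i\le F_{Y_i}^{-1}(\ga)$, by monotonicity of the quantile function. Hence $A$ is a common $\ga$-tail event, and by Lemma~\ref{lem:char-hidden} ((ii)$\Rightarrow$(i)), or by the direct computation in its proof, $X$ has $(c,\ga)$-hidden dependence with this $A$.

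\emph{Domination on $A$.} This is the key step and the main obstacle. We must show
\[
G_i^{-1}(w)\le F_{Y_i}^{-1}\big(\ga+(1-\ga)w\big)\quad\text{for all }w\in(0,1).
\]
This is the statement that the conditional law of $Y_i$ given an arbitrary event of probability $1-\ga$ is dominated in the usual stochastic order by the upper tail law. For any $a$,
\[
\P(Y_i>a\,|\,A)\le \min\Big\{1,\tfrac{\P(Y_i>a)}{1-\ga}\Big\}=\P\big(F_{Y_i}^{-1}(\ga+(1-\ga)W)>a\big),
\]
where the last equality is a routine quantile computation using the left-continuous convention, which has to be checked carefully at jumps of $F_{Y_i}$. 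Stochastic dominance then yields the quantile inequality pointwise, so $X_i\ge G_i^{-1}(W_i)=Y_i$ a.s.\ on $A$.
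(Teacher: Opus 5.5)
Your proof is correct, and its architecture matches the paper's. You use the same $V_i$, equal to $\ga U_i$ on $A^\cmp$ and to $\ga+(1-\ga)$ times a conditionally uniform variable on $A$, then set $X_i=F_{Y_i}^{-1}(V_i)$ and verify marginals, the common tail event and hidden dependence in the same way.

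The difference is in how you produce the conditional uniform on $A$ and prove domination.

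\textbf{The paper's route.} It writes $Y_i=F_{Y_i}^{-1}(U_i)$ with a global uniform $U_i$ and takes $F_{\mu_i}(u)=\P(U_i\le u\,|\,A)$. This function is automatically continuous, since $u\mapsto\P(\{U_i\le u\}\cap A)$ is $1$-Lipschitz. It then uses $\ga+(1-\ga)F_{\mu_i}(U_i)$ on $A$. Domination follows from the one-line pointwise inequality $\ga+(1-\ga)F_{\mu_i}(u)\ge u$, pushed through the monotone map $F_{Y_i}^{-1}$.

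\textbf{Your route.} You represent $Y_i$ on $A$ by its own conditional quantile function, $Y_i=G_i^{-1}(W_i)$. This needs a second application of the F\"ollmer--Schied lemma, on the conditional space. You then compare $G_i^{-1}$ with $w\mapsto F_{Y_i}^{-1}(\ga+(1-\ga)w)$ via first-order stochastic dominance. The step you flagged does go through exactly. Using $F_{Y_i}^{-1}(t)\le a\iff t\le F_{Y_i}(a)$ gives
\[
\P\big(F_{Y_i}^{-1}(\ga+(1-\ga)W)>a\big)=\min\big\{1,(1-F_{Y_i}(a))/(1-\ga)\big\}
\]
for every $a$. Since $w\mapsto F_{Y_i}^{-1}(\ga+(1-\ga)w)$ is non-decreasing and left-continuous, it is the left quantile function of that law, so the pointwise quantile inequality follows.

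\textbf{Comparison.} Both arguments encode the same fact: the conditional law given any event of probability $1-\ga$ is stochastically dominated by the upper $\ga$-tail law. The paper works at the level of uniforms, where the conditional distribution function is continuous, so it needs no bookkeeping at jumps of $F_{Y_i}$ and no second representation lemma. Your version makes the stochastic-order content explicit.

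Your ``Marginals'' paragraph is garbled in wording, but its conclusion is right: $V_i$ is uniform under $\P$, which is a two-line computation splitting at $v\le\ga$ and $v>\ga$.
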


\begin{proof}
 Let $U_1,\ldots, U_n$ be uniformly distributed random variables such that $Y_i=F_{Y_i}^{-1}(U_i)$ for $i=1,\ldots, n$.\ We denote by $F_{\mu_i}$ the conditional distribution function of $U_i$ given $A$, i.e., $$F_{\mu_i}(u):=\P(U_i\leq u\,|\, A)\quad \text{for }u\in (0,1)\text{ and }i=1,\ldots, n.$$ Then, $F_{\mu_i}(U_i)$ is uniformly distributed conditionally on $A$. Moreover let $U_1^c, \ldots, U_n^c$ be uniformly distributed with copula $c$ conditionally on $A^\cmp$. Let $$V_i:=\ga U_i^c\eins_{A^\cmp}+\big(\ga+(1-\ga)F_{\mu_i}(U_i)\big)\eins_A\quad \text{for }i=1,\ldots, n.$$
 Then, for $u\in (0,1)$ and $i=1,\ldots, n$,
 \[
 \ga+(1-\ga)F_{\mu_i}(u)=\ga+\P\big(\{U_i\leq u\}\cap A\big)\geq \ga +(1-\ga)-(1-u)=u.
 \]
 Therefore $V_i\geq U_i$ on $A$ for $i=1,\ldots, n$.\ Now, we define $X_i:=F_{Y_i}^{-1}(V_i)$ for $i=1,\ldots, n$. Then, $X_i\geq Y_i$ $\P$-a.s.\ on $A$ and
 \[
 \P\big(X_i\leq a\,|\, A^\cmp\big)=\P\big(F^{-1}_{Y_i}( \ga U_i^c)\leq a\,\big|\, A^\cmp\big)=\P\bigg(  U_i^c\leq \frac{F_{Y_i}(a)\wedge \ga}{\ga}\,\bigg|\, A^\cmp\bigg)=\frac{F_{Y_i}(a)\wedge\ga }{\ga}
 \]
 for $a\in \R$ and $i=1,\ldots, n$. 
 Moreover, for $a=(a_1,\ldots, a_n)\in \R^n$,
 \begin{align*}
       \P\big(X \leq a \vert A^\cmp\big)&= \P\Bigg( \bigcap_{i=1}^{n} \Big\{F^{-1}_{Y_i}\big( \ga U_i^c\big)\leq a_i\Big\} \,\Bigg \vert \, A^\cmp \Bigg) = \P\Bigg( \bigcap_{i=1}^{n} \Bigg\{   U_i^c \leq \frac{F_{Y_i}(a_i)\wedge \ga}{\ga}\Bigg\}\, \Bigg \vert \,A^\cmp \Bigg)  \\ 
       &= c\Bigg ( \frac{F_{Y_1}(a_1)\wedge \ga}{\ga} ,\ldots, \frac{F_{Y_n}(a_n)\wedge \ga}{\ga} \Bigg)\\
   &= c\big(\P(X_1\leq a_1\,|\, A^\cmp),\ldots, \P(X_n\leq a_n\,|\, A^\cmp)\big).
   \end{align*} 
 By definition $A$ is a common $\ga$-tail event for $X=(X_1,\ldots, X_n)$.
\end{proof}

The next corollary highlights an interesting consequence of Proposition \ref{th: cgammadominance} from the risk aggregation point of view.

\begin{corollary}\label{co: aggregationtail}
Let  $f\colon \R^n\to \R$ be non-decreasing and measurable, $c\colon [0,1]^n\to [0,1]$ be a copula, $\ga\in (0,1)$, $Y=(Y_1,\ldots, Y_n)$ be a random vector, and $A\in \cF$ be a $\ga$-tail event for $f(Y)$.\ Then, there exists a random vector $X=(X_1,\ldots, X_n)$ with $(c,\ga)$-hidden dependence, $X_i\overset{\d}{=}Y_i$ for $i=1,\ldots, n$ and common $\ga$-tail event $A$ such that
\[
f(X)\geq  f(Y) \quad  \P\text{-a.s.\ on }A.
\]
Thus, for any $\ga$-tail risk measure $R\colon \cX\to \R$ defined on a set of random variables $\cX$ containing $f(X)$ and $f(Y)$,
\[
R\big(f(X)\big)\geq R\big( f(Y)\big).
\]
\end{corollary}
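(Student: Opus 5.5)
The plan is to apply Proposition \ref{th: cgammadominance} directly, with the given $\ga$-tail event $A$ of $f(Y)$ as the prescribed event. Since $\P(A)=1-\ga$ by the definition of a $\ga$-tail event, the proposition applies. It yields a random vector $X=(X_1,\ldots,X_n)$ with the following properties:
\begin{itemize}
\item[] $X$ has $(c,\ga)$-hidden dependence;
\item[] $X_i\deq Y_i$ for each $i$;
\item[] $A$ is a common $\ga$-tail event for $X$;
\item[] $X_i\ge Y_i$ $\P$-a.s.\ on $A$ for each $i$.
\end{itemize}
Because $f$ is non-decreasing, the componentwise inequality $X\ge Y$ on $A$ gives $f(X)\ge f(Y)$ $\P$-a.s.\ on $A$. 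This is the first claim.

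For the risk-measure inequality, I first note two facts.
\begin{itemize}
\item[] By Remark \ref{rem.tailevent.aggregate}, $A$ is a $\ga$-tail event for $f(X)$.
\item[] By hypothesis, $A$ is a $\ga$-tail event for $f(Y)$.
\end{itemize}
So both aggregates share the same tail event $A$. By Remark \ref{rem: tailriskandevents} (via Lemma \ref{lem: tailevent} c)), the law of $f(X)^\ga$ equals the conditional law of $f(X)$ given $A$, and likewise for $f(Y)$. Since $f(X)\ge f(Y)$ on $A$, this gives $\P(f(X)^\ga\le a)\le \P(f(Y)^\ga\le a)$ for all $a$, i.e.\ $f(Y)^\ga$ is first-order stochastically dominated by $f(X)^\ga$.

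It remains to turn this stochastic dominance into $R(f(X))\ge R(f(Y))$, using only monotonicity of $R$ and its dependence on $Z^\ga$ alone. I construct an auxiliary random variable $W:=F_{f(Y)}^{-1}\big(F_{f(Y)}(f(X))\big)$, or more simply a variable with the distribution of the pointwise maximum. The cleanest version is $Z:=f(X)\vee f(Y)$. Then:
\begin{itemize}
\item[] $Z\ge f(Y)$ everywhere, so monotonicity gives $R(Z)\ge R(f(Y))$;
\item[] $Z=f(X)$ on $A$;
\item[] on $A^\cmp$, both $f(X)$ and $f(Y)$ are bounded above by their values on $A$ (each has $A$ as a tail event), so $Z\le \inf Z(A)$ on $A^\cmp$ and $A$ is a $\ga$-tail event for $Z$.
\end{itemize}
Hence, by Remark \ref{rem: tailriskandevents}, $Z^\ga\deq$ the law of $Z$ given $A$, which is the law of $f(X)$ given $A$, which is the law of $f(X)^\ga$. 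The tail property of $R$ then gives $R(Z)=R(f(X))$, and combining the two relations yields $R(f(X))\ge R(f(Y))$.

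The main obstacle is this last step. I must check that $Z$ genuinely has $A$ as a $\ga$-tail event. This needs $\sup f(Y)(A^\cmp)\le \inf f(X)(A)$, which follows from $\sup f(Y)(A^\cmp)\le\inf f(Y)(A)\le \inf f(X)(A)$, up to null sets. I also need $Z\in\cX$. If $\cX$ is not assumed closed under maxima, I would instead rely on the distributional invariance of $R$ and on $\cX$ containing the relevant variables, possibly requiring a mild assumption on $\cX$ such as law-invariance or lattice closure, as in \cite{liu2021theory}.
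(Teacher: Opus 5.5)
Your argument follows the paper's route. The first part is exactly the paper's: Proposition \ref{th: cgammadominance} plus monotonicity of $f$. For the second part, the paper combines Remark \ref{rem.tailevent.aggregate} with Remark \ref{rem: tailriskandevents} and simply says this ``implies'' $R(f(X))\ge R(f(Y))$. That one line leaves implicit how the first-order dominance of $f(X)^\ga$ over $f(Y)^\ga$ becomes an inequality for $R$, which is only assumed monotone under a.s.\ inequalities between elements of $\cX$. Your auxiliary variable $Z=f(X)\vee f(Y)$ is exactly the missing bridge. Your check that $A$ is a $\ga$-tail event for $Z$ even holds pointwise: $Z\ge f(X)$ and $Z\ge f(Y)$ everywhere give $\inf Z(A)\ge \sup f(X)(A^\cmp)\vee\sup f(Y)(A^\cmp)=\sup Z(A^\cmp)$. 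So the ``up to null sets'' detour in your last paragraph is unnecessary.

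Your caveat about $Z\in\cX$ is not a defect of your proof but of the statement: some richness of $\cX$ is genuinely needed. Take $n=2$, $f(x)=x_1+x_2$, $Y_1,Y_2$ i.i.d.\ uniform on $(0,1)$, $\ga=\tfrac12$ and $A=\{Y_1+Y_2>1\}$. For any $X$ as in the corollary, each $X_i$ is uniform on $(\tfrac12,1)$ given $A$, so $\E[f(X)\,|\,A]=\tfrac32\neq\tfrac43=\E[f(Y)\,|\,A]$. Hence $f(X)^\ga\not\deq f(Y)^\ga$. Moreover, $\E[f(X)]=\E[f(Y)]$ while $\E[f(X)\eins_A]>\E[f(Y)\eins_A]$, so $f(X)<f(Y)$ on a non-null subset of $A^\cmp$ and $f(X)>f(Y)$ on a non-null subset of $A$. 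Neither variable dominates the other a.s., so on $\cX=\{f(X),f(Y)\}$ the assignment $R(f(X))=0$, $R(f(Y))=1$ is a $\ga$-tail risk measure that violates the conclusion.

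The same assignment also works on the law-invariant hull $\{W : W\deq f(X)\text{ or }W\deq f(Y)\}$, because $W_1\le W_2$ a.s.\ with equal finite means forces $W_1=W_2$ a.s. So you should drop law invariance as a candidate fix. The right extra hypothesis is the lattice-type one you mention, e.g.\ $f(X)\vee f(Y)\in\cX$, which holds automatically for $\cX=L^0$, $L^p$ or $L^\infty$; with it, your proof is complete. The same gap affects how the paper later uses this corollary with merely aggregation-stable $\cX$.
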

\begin{proof}
 The first part of the statement follows directly from Proposition \ref{th: cgammadominance}, given that the aggregation function $f$ is non-decreasing.\ By Remark \ref{rem.tailevent.aggregate} $A$ is a $\ga$-tail event for $f(X)$ and, by Remark \ref{rem: tailriskandevents},  $\gamma$-tail risk measures only depend on the conditional distribution of random variables on a $\gamma$-tail event, which implies the second part of the statement.
\end{proof} 

Starting from Corollary \ref{co: aggregationtail}, we now focus on the study of the worst case scenario for a given tail risk measure under dependence uncertainty, i.e., when the risks' marginal distributions are given but their joint distribution is
not completely specified, see \eqref{eq: mainprob}.\ To this end, it is useful to formulate a compatibility assumption between the domain of the risk measure of interest and the considered aggregation function.
\begin{definition}\label{de.aggregation}
Let $f\colon \R^n\to \R$ and $\mu_1,\ldots, \mu_n\in \cP(\R)$.\ We say that a set $\cX$ of random variables is \textit{$(f,\mu_1,\ldots, \mu_n)$-aggregation-stable} if $f(X)\in \cX$ for all random vectors $X=(X_1,\ldots, X_n)$ with $X_i\sim\mu_i$ for $i=1,\ldots, n$.
\end{definition}

The following corollary states that when searching for the worst-case value attainable by any $\ga$-tail risk measure of a non-decreasing measurable aggregation function, one does not need to optimize over the set of all $n$-variate distributions compatible with the prescribed marginals, but the search can be narrowed down to the $n$-variate distributions of vectors with $(c,\ga)$-hidden dependence, where the copula $c\colon[0,1]^n\to [0,1]$ can be freely chosen.\ We stress that the statement is rather general as it includes arbitrary tail risk measures and aggregation functions. 

\begin{corollary} \label{co: full uncostrained}
           Let  $f\colon \R^n\to \R$ be non-decreasing and measurable, $\mu_1,\ldots, \mu_n\in \cP(\R)$, $c\colon [0,1]^n\to [0,1]$ be a copula, $\ga\in (0,1)$, and $\cX$ be an $(f,\mu_1,\ldots, \mu_n)$-aggregation-stable set of random variables.\ Then, for any $\ga$-tail risk measure $R\colon \cX\to \R$,  
\[\begin{aligned}
\sup \quad & {R}(f(X)) &= \quad \sup \quad & {R}(f(X))\\
\textrm{s.t.} \quad & X_i\sim \mu_i, \textbf{ }i=1,\ldots,n, & \textrm{s.t.} \quad & X_i\sim \mu_i, \textbf{ }i=1,\ldots,n.\\
& X \text{ has a $(c,\ga)$-hidden dependence} 
\end{aligned}\]
\end{corollary}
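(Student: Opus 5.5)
The equality of the two suprema follows from two inequalities. The inequality ``$\leq$'' is immediate: every random vector $X$ with $X_i\sim\mu_i$ and $(c,\ga)$-hidden dependence is admissible for the unconstrained problem. By $(f,\mu_1,\ldots,\mu_n)$-aggregation-stability, $f(X)\in\cX$, so $R(f(X))$ is defined. Hence the left-hand supremum ranges over a subset of the values in the right-hand supremum.

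For ``$\geq$'', I would fix an arbitrary random vector $Y=(Y_1,\ldots,Y_n)$ with $Y_i\sim\mu_i$ and produce a competitor $X$ for the left-hand problem with $R(f(X))\geq R(f(Y))$. The steps are as follows.

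\begin{enumerate}
\item[1.] Aggregation-stability gives $f(Y)\in\cX$.
\item[2.] By the result of \cite{WangZitikis2020} recalled after the definition of tail events, the random variable $f(Y)$ admits a $\ga$-tail event $A\in\cF$ with $\P(A)=1-\ga$.
\item[3.] Apply Corollary \ref{co: aggregationtail} with this $A$, the given copula $c$ and the non-decreasing measurable $f$. It yields $X$ with $(c,\ga)$-hidden dependence, $X_i\deq Y_i$, i.e.\ $X_i\sim\mu_i$, and $f(X)\geq f(Y)$ a.s.\ on $A$.
\item[4.] Aggregation-stability again gives $f(X)\in\cX$, so the second part of Corollary \ref{co: aggregationtail} applies and gives $R(f(X))\geq R(f(Y))$.
\end{enumerate}

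Since $X$ is feasible for the constrained problem, the left-hand supremum is at least $R(f(Y))$. Taking the supremum over all such $Y$ gives ``$\geq$''.

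There is no real obstacle, since the work is already done in Proposition \ref{th: cgammadominance} and Corollary \ref{co: aggregationtail}. The only points needing care are that both $f(X)$ and $f(Y)$ lie in $\cX$, which is exactly what aggregation-stability supplies, and that a $\ga$-tail event for $f(Y)$ exists. The latter relies on the atomless probability space, which is assumed throughout. The argument also covers the case where the suprema are $+\infty$, because the inequalities hold pointwise for each admissible $Y$.
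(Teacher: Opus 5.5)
Your argument follows the route the paper has in mind. The paper gives no separate proof and presents the corollary as an immediate consequence of Corollary \ref{co: aggregationtail}, applied to a $\ga$-tail event of $f(Y)$. The ``$\leq$'' direction and steps 1--3 are fine.

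\textbf{The gap is step 4.} It is inherited from the second half of Corollary \ref{co: aggregationtail}, so citing that result does not close it. A $\ga$-tail risk measure is only required to be monotone for pairs in $\cX$ that are ordered $\P$-a.s., and $f(X)$, $f(Y)$ are not ordered on $A^\cmp$. Remark \ref{rem: tailriskandevents} says that $R$ is a function of the law of $Z^\ga$. It does not say that this function is increasing in the stochastic order, and for a thin $\cX$ it need not be.

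\textbf{Counterexample.} Take $n=2$, $\mu_1=\mu_2$ uniform on $(0,1)$, $f(x)=x_1+x_2$, $\cX:=\{X_1+X_2\,|\,X_1,X_2\sim\mu_1\}$ (aggregation-stable by construction), and $R:=-\TVaR^\ga$.
\begin{enumerate}
\item[(a)] Every $Z\in\cX$ has $\E[Z]=1$, so $Z_1\leq Z_2$ a.s.\ forces $Z_1=Z_2$ a.s. Hence $R$ is trivially monotone on $\cX$.
\item[(b)] $R$ depends only on the law of $Z^\ga$, because $\TVaR^\ga(Z)=\E[Z^\ga]$. So $R$ is a $\ga$-tail risk measure in the paper's sense.
\item[(c)] For countermonotone $Y$ one has $f(Y)\equiv 1$ and $R(f(Y))=-1$. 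This is the right-hand supremum, since $\TVaR^\ga(Z)\geq \E[Z]=1$.
\item[(d)] Let $X$ have $(c,\ga)$-hidden dependence with common $\ga$-tail event $A$. Then $A$ is a $\ga$-tail event of $X_1+X_2$ (Remark \ref{rem.tailevent.aggregate}), and each $X_i$ is uniform on $(\ga,1)$ given $A$ (Lemma \ref{lem: tailevent} b)). Hence $R(f(X))=-\E[X_1+X_2\,|\,A]=-(1+\ga)$.
\end{enumerate}
The left-hand supremum is therefore $-(1+\ga)<-1$. Step 4 fails, and so does the statement at this level of generality.

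\textbf{What is missing} is a way to transfer a.s.\ monotonicity to the tails. Set $W:=f(X)\eins_A+f(Y)\eins_{A^\cmp}$.
\begin{enumerate}
\item[(a)] $W\geq f(Y)$ a.s.
\item[(b)] $A$ is a $\ga$-tail event of $W$: a.s.\ on $A$, $W=f(X)\geq f(Y)\geq \sup f(Y)(A^\cmp)=\sup W(A^\cmp)$.
\item[(c)] $W=f(X)$ on $A$, so $W^\ga\deq f(X)^\ga$.
\end{enumerate}
It follows that $R(f(X))=R(W)\geq R(f(Y))$, provided $W\in\cX$. Aggregation-stability does not give this, because $W=f(X\eins_A+Y\eins_{A^\cmp})$ and the pasted vector generally does not have marginals $\mu_i$.

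Your four steps go through as written under either of the following extra hypotheses:
\begin{enumerate}
\item[(i)] $\cX$ is closed under such pastings, as any $L^p$ space or the set of all random variables is; or
\item[(ii)] $R$ is increasing with respect to first-order stochastic dominance of $Z^\ga$.
\end{enumerate}
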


The following theorem is the first main result of this paper concerning the computation of risk bounds under dependence uncertainty.

\begin{theorem}
 \label{th.risk-aggregation-general}
           Let $f\colon \R^n\to \R$ be non-decreasing and measurable, $\mu_1,\ldots, \mu_n\in \cP(\R)$, $c\colon [0,1]^n\to [0,1]$ be a copula, $\ga\in (0,1)$, $\cX$ be an $(f,\mu_1,\ldots, \mu_n)$-aggregation-stable set of random variables, and assume that $\cC\subseteq \cP(\R^n)$ contains the $n$-variate distributions of all random vectors with $(c,\ga)$-hidden dependence and marginals $\mu_1,\ldots, \mu_n$.\ Then, for any $\ga$-tail risk measure $R\colon \cX\to \R$,
\[\begin{aligned}
\sup \quad & \mathcal{R}\big (f(X) \big)&= \quad \sup \quad & \mathcal{R}\big (f(X) \big)\\
\textrm{s.t.} \quad & X_i\sim \mu_i, \textbf{ }i=1,\ldots,n,  & \textrm{s.t.} \quad & X_i\sim\mu_i, \textbf{ }i=1,\ldots,n.\\
& \P_X\in \cC & &  \\
\end{aligned}\]
\end{theorem}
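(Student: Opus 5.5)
The plan is a sandwich argument: bound the constrained supremum from both sides by the unconstrained one, using Corollary \ref{co: full uncostrained} for the lower bound. Throughout, all random vectors have marginals $\mu_1,\ldots,\mu_n$. Since $\cX$ is $(f,\mu_1,\ldots,\mu_n)$-aggregation-stable, $R(f(X))$ is well defined for every such $X$, so all suprema are taken over nonempty sets of well-defined real numbers.

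\emph{Upper bound.} The left-hand problem has the additional constraint $\P_X\in\cC$. Its feasible set is therefore contained in the feasible set of the right-hand (unconstrained) problem. Hence the left supremum is at most the right supremum; this step is immediate.

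\emph{Lower bound.} We first show that the class of vectors with $(c,\ga)$-hidden dependence and the given marginals is nonempty. Take any vector $Y$ with $Y_i\sim\mu_i$, for instance the comonotonic one $Y=F_Y^{-1}(U,\ldots,U)$, which exists because the space is atomless. Choose any $A\in\cF$ with $\P(A)=1-\ga$, which is again possible by atomlessness. Proposition \ref{th: cgammadominance} then yields a vector $X$ with $(c,\ga)$-hidden dependence and $X_i\sim\mu_i$.

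By hypothesis on $\cC$, every such vector satisfies $\P_X\in\cC$. So the feasible set of the left problem contains the feasible set of the hidden-dependence problem in Corollary \ref{co: full uncostrained}. It follows that
\[
\sup\{R(f(X)) : X_i\sim\mu_i,\ \P_X\in\cC\}\;\ge\;\sup\{R(f(X)) : X_i\sim\mu_i,\ X\text{ has }(c,\ga)\text{-hidden dependence}\}.
\]
By Corollary \ref{co: full uncostrained}, the right-hand side equals the unconstrained supremum. Combining this with the upper bound gives equality.

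\emph{Possible obstacle.} The only subtle point is that Corollary \ref{co: full uncostrained} is stated without proof. If it must be justified, I would argue as follows. Given any $Y$ with marginals $\mu_i$, pick a $\ga$-tail event $A$ for $f(Y)$; such an event exists by the Wang--Zitikis result recalled after the definition. Corollary \ref{co: aggregationtail} then produces $X$ with $(c,\ga)$-hidden dependence, the same marginals, and $R(f(X))\ge R(f(Y))$. Taking the supremum over $Y$ gives the needed inequality, and the reverse inequality is trivial by inclusion of feasible sets. Everything else is set inclusion, so I expect no real difficulty beyond checking that the constraint $\P_X\in\cC$ never removes a hidden-dependence candidate, which is exactly the hypothesis on $\cC$.
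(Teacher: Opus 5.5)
Your proposal is correct and follows the paper's proof exactly. The upper bound comes from inclusion of feasible sets, and the lower bound from the fact that $\cC$ contains all $(c,\ga)$-hidden-dependence laws with the given marginals, followed by Corollary~\ref{co: full uncostrained}. Your extra justification of that corollary (a $\ga$-tail event for $f(Y)$ plus Corollary~\ref{co: aggregationtail}) is the reasoning the paper leaves implicit, and your nonemptiness check is harmless but not needed.
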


\begin{proof}
   Clearly, the left-hand side is less than or equal to the right-hand side.\ Since, by assumption, the set $\cC$ contains the $n$-variate distributions of all random vectors with $(c,\ga)$-hidden dependence and marginals $\mu_1,\ldots, \mu_n$,
\[\begin{aligned}
\sup \quad & \mathcal{R}\big (f(X) \big)& \geq \quad \sup \quad & \mathcal{R}\big (f(X) \big)\\
\textrm{s.t.} \quad & X_i\sim {\mu_i}, \textbf{ }i=1,\ldots,n,  & \textrm{s.t.} \quad & X_i\sim {\mu_i}, \textbf{ }i=1,\ldots,n,\\
& \P_X\in \cC & & X \text{ has a $(c,\ga)$-hidden dependence,}  \\
\end{aligned}\]
From Corollary \ref{co: full uncostrained} we know that 
\[\begin{aligned}
\sup \quad & {R}\big(f(X)\big) &= \quad \sup \quad & {R}\big(f(X)\big)\\
\textrm{s.t.} \quad & X_i\sim {\mu_i}, \textbf{ }i=1,\ldots,n, & \textrm{s.t.} \quad & X_i\sim {\mu_i}, \textbf{ }i=1,\ldots,n,\\
& X \text{ has a $(c,\ga)$-hidden dependence} &&    \\
\end{aligned}\]
which ends the proof.
\end{proof}

\begin{remark}
 Consider the situation of Theorem \ref{th.risk-aggregation-general} in the case of identically distributed marginals, i.e., $\mu_i=\mu\in \cP(\R)$ for $i=1,\ldots, n$, and $f(x)=\sum_{i=1}^n\la_i x_i$ with weights $\la_1,\ldots,\la_n\in [0,\infty)$.\ Then the set $\cX$ is $(f,\mu,\ldots,\mu)$-aggregation-stable if $\cX$ is a convex cone containing all $\mu$-distributed random variables.\ If $\cR\colon \cX\to \R$ is a sublinear $\ga$-tail risk measure, choosing $c^{\rm tail}$ in Lemma \ref{lem:char-hidden} as the co-monotonic copula, one obtains
\[\begin{aligned}
\sup \quad & \mathcal{R}\bigg(\sum_{i=1}^n \la_iX_i \bigg)&= \quad R(\mu) \sum_{i=1}^n \la_i, \\
\textrm{s.t.} \quad & X_i\sim \mu, \textbf{ }i=1,\ldots,n,  & &\\
& \P_X\in \cC & &  \\
\end{aligned}\]
where $\cR(\mu)$ denotes the risk $\cR(Z)$ of a $\mu$-distributed random variable $Z$.
\end{remark}

\subsection{Distributional Dependence Uncertainty}  \label{sec: partial}

In this section, we apply Theorem \ref{th.risk-aggregation-general} in the context of distributional dependence uncertainty.\ To that end, we introduce the concept of a consistent probability distance, which serves as the main tool to  describe partial information on the risks' dependence.

\begin{definition}
Let $\cM\subseteq \cP(\R)$ be nonempty.\ We say that a map $d\colon \cpl^n(\cM)\times \cpl^n(\cM)\to [0,\infty)$ is a \textit{consistent probability distance} {in $\pi\in \cpl^n(\cM)$} if
$$\lim_{k\to \infty} d(\pi_k,\pi)= 0$$
for all sequences $(\pi_k)_{k\in \N}\subset \cpl^n(\cM)$ with $\pi_k\to \pi$ as $k\to \infty$ and $\pi_k\circ \pr_i^{-1}=\pi\circ \pr_i^{-1}$ for all $k\in \N$ and $i=1,\ldots, n$.\ We say that $d$ is a \textit{consistent probability distance} if it is a consistent probability distance in every $\pi\in \cpl^n(\cM)$.
\end{definition}

Observe that, choosing $\pi^k=\pi$ in the previous definition, it follows that $d(\pi,\pi)=0$ for every consistent probability distance $d$. However, in contrast to a metric on a subset of probability measures, $d$ does not need to be symmetric nor does it have to satisfy the triangle inequality.\ We further point out that the term \textit{consistent} refers to consistency with weak convergence of probability measures.\ 

Let $\cM\subseteq \cP(\R)$.\ Again, we will fix a reference copula $c\colon [0,1]^n\to [0,1]$ and univariate distributions $\mu_1,\ldots,\mu_n\in \cM$.\ Then, distributional dependence uncertainty will be described via a consistent probability distance $d\colon \cpl^n(\cM)\times \cpl^n(\cM)\to [0,\infty)$ in $c(F_{\mu_1},\ldots, F_{\mu_n})$ and $\ep>0$, considering  the set
\begin{equation}\label{eq: uncertaintyset}
   \cC:= \big\{ \pi \in \cpl^n(\cM) \,\big|\, d\big(\pi,c(F_{\mu_1},\ldots, F_{\mu_n})\big)\leq \ep\big\}.
\end{equation}
This choice of $\cC$ collects all elements of $\cpl^n(\cM)$, which do not deviate more than $\ep >0$ from the reference distribution identified by the copula $c$ and the marginals $\mu_1,\ldots, \mu_n$ when the deviation is measured via the consistent probability distance $d$. 

Before we state our second main result, we provide a series of examples for consistent probability distances.
\begin{example}\label{ex.probabdistance}\
\begin{itemize}
\item[a)] Let $p\in [1,\infty)$ and $\cM$ be the set of all $\mu\in \cP(\R)$ with $\int_\R |x|^p\,\mu(\d x)<\infty$. Then, by \cite[Theorem 6.9]{villani2008optimal},
\[
d(\pi_1,\pi_2):=\cW^p(\pi_1,\pi_2),\quad\text{for }\pi_1,\pi_2\in \cpl^n(\cM),
\]
defines a consistent probability distance, where $\cW_p$ denotes the $p$-Wasserstein distance, cf.\ \cite[Chapter 6]{villani2008optimal} for a definition and elementary properties. 
\item[b)] Let $\cM=\cP(\R)$ and $d$ be an integral probability metric (IPM), i.e., 
$$
 d(\pi_1,\pi_2):= \sup_{f\in \mathscr F} \bigg| \int_{\R^n} f\,\d \pi_1-\int_{\R^n} f\,\d \pi_2\bigg|\quad \text{for all }\pi_1,\pi_2\in \cP(\R^n),
$$
where $\mathscr F$ is a non-empty set of measurable functions $\R^n\to \R$.\ Then $d$ is a consistent probability distance if the map
\begin{equation}\label{eq.dstar}
d^*\colon \R^n\times \R^n\to [0,\infty),\; (x,y)\mapsto \sup_{f\in \mathscr F} |f(x)-f(y)|
\end{equation}
is bounded and continuous,
cf. \cite[Corollary 2.7, p.\ 16]{zbMATH03517666} and \cite[Corollary 4.4]{zbMATH01112397}.
\item[c)]
 Let $\cM:=\cP(\R)$.\ Then, the Fortet-Mourier metric, i.e., the integral probability metric on $\cP(\R^n)$,  where $\mathscr F$ consists of all $1$-Lipschitz functions with $\sup_{x\in \R^n}|f(x)|\leq 1$ is a consistent probability distance.\ Moreover, it metrizes the weak topology on $\cP(\R^n)$, cf.\ \cite[Corollary 2.8, p.\ 17]{zbMATH03517666}.
 \item[d)] Let $\cM$ denote the set of all probability measures with continuous distribution function. Then, by \cite[Lemma 18]{CorradoMaxJan}, 
 \[
 d(\pi_1,\pi_2):= \sup_{a\in \R^n}\big| \pi_1\big((-\infty,a]\big) -\pi_2\big((-\infty, a]\big)\big|,\quad\text{for }\pi_1,\pi_2\in \cpl^n(\cM),
 \]
 defines a consistent probability distance.\ Observe that $d$ is also an IPM if one chooses $\mathscr F:=\{\eins_{(-\infty,a]}\, |\, a\in \R^n\}$.\ However, this choice of $\mathscr F$ does not lead to a continuous map $d^*$ given by \eqref{eq.dstar}.
 \item[e)] Let $\rho\colon \cpl^n(\cM)\to [-1,1]$ be a regular dependence measure in the sense of \cite{CorradoMaxJan}, i.e., $\rho(\mu_1\otimes \cdots\otimes \mu_n)=0$ for all $\mu_1,\ldots, \mu_n\in \cM$ and
 $\limsup_{k\to \infty} \rho(\pi_k)\leq 0$ for all sequences $(\pi_k)_{k\in \N}\subset \cpl^n(\cM)$ with $\pi_k\to \mu_1\otimes \cdots\otimes \mu_n$ as $k\to \infty$ and $\pi_k\circ \pr_i^{-1}=\mu_i\in \cM$ for all $k\in \N$ and $i=1,\ldots, n$.\ Then,
 \[
 d(\pi_1,\pi_2):=\max\big\{\rho(\pi_1)-\rho(\pi_2),0\big\},\quad \text{for }\pi_1,\pi_2\in \cpl^n(\cM),
 \]
 defines a consistent probability distance at $\mu_1\otimes \cdots\otimes \mu_n$ for all $\mu_1,\ldots, \mu_n\in \cM$.
\end{itemize}
\end{example}

\begin{remark}\label{rem: boundeddist}
 Let $d\colon \cP(\R^n)\times \cP(\R^n)\to [0,\infty)$ be an integral probability distance with a nonempty set $\mathscr F$ of measurable functions $\R^n\to \R$ such that \eqref{eq.dstar} is continuous and bounded by a constant $C\geq 0$.\ Moreover, let $Y=(Y_1,\ldots, Y_n)$ be a random vector with copula $c$, $\ga\in (0,1)$ and $X=(X_1,\ldots, X_n)$ be a random vector with $X_i\overset{\d}{=}Y_i$ for $i=1,\ldots, n$, $(c,\ga)$-hidden dependence, and common $\ga$-tail event $A\in \cF$. Then, by Lemma \ref{lem:char-hidden} (iv),
\begin{align*}
d(\P_X,\P_Y)&= \sup_{f\in \mathscr F}\Big|\E\big[f(X)\big]- \E\big[f(Y)\big]\Big|\leq \E\big[d^*(X,Y)\big]\\
&\leq \ga \E\big[d^*\big(F_Y^{-1}(\ga U),F_Y^{-1}(U)\big)\big] + C(1-\ga)\\
&= \ga \int_{[0,1]^n} d^*\big(F_Y^{-1}(\ga u),F_Y^{-1}(u)\big)\, \d c(u) + C(1-\ga).
\end{align*}
Since $F_Y^{-1}$ is left-continuous and $d^*$ is continuous and bounded, by the dominated convergence theorem, for all $\ep>0$, there exists some $\ga_0\in (0,1)$ such that
\[
d(\P_X,\P_Y)\leq \ep \quad \text{if}\quad \ga\geq \ga_0.
\] 
\end{remark}

The next theorem shows that no matter how small the deviation that we allow from the reference model identified by a copula $c\colon [0,1]^n \to [0,1]$ and prescribed marginals $\mu_1,\ldots, \mu_n\in \cM$ with a given nonempty set $\cM\subseteq\cP(\R)$, one can always find a $\gamma$ such that all coupling with
$(c,\gamma)$-hidden dependence belong to set of feasible models defined in \eqref{eq: uncertaintyset}.

\begin{theorem}
    \label{th: boundeddistance}
Let $\cM\subseteq \cP(\R)$ be nonempty, $\mu_1,\ldots, \mu_n\in \cM$, $c\colon [0,1]^n\to [0,1]$ be a copula, and $d\colon \cpl^n(\cM)\times \cpl^n(\cM)\to [0,\infty)$ be a consistent probability distance {at $c(F_{\mu_1},\ldots, F_{\mu_n})$}.\ Then, for all $\ep>0$, there exists $\ga\in (0,1)$ such that every random vector $X=(X_1,\ldots X_n)$ with $(c,\ga)$-hidden dependence and $X_i\sim \mu_i$ for $i=1,\ldots, n$ satisfies
$$d\big(\P_X,c(F_{\mu_1},\ldots, F_{\mu_n})\big)\leq \ep.$$
\end{theorem}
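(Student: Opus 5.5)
I would argue by contradiction and reduce everything to the definition of a consistent probability distance at $\pi:=c(F_{\mu_1},\ldots,F_{\mu_n})$. Suppose the statement fails for some $\ep>0$. Then for every $\ga\in(0,1)$ there is a random vector with $(c,\ga)$-hidden dependence, marginals $\mu_1,\ldots,\mu_n$, and distance strictly larger than $\ep$ from $\pi$. Choosing $\ga_k\uparrow 1$ gives vectors $X^k$ with $X^k_i\sim\mu_i$ and $d(\P_{X^k},\pi)>\ep$ for all $k\in\N$. Each $\P_{X^k}$ lies in $\cpl^n(\cM)$ and has the same marginals as $\pi$. So if I show $\P_{X^k}\to\pi$ weakly, consistency gives $d(\P_{X^k},\pi)\to 0$, which is the desired contradiction.

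For the weak convergence I would use Lemma \ref{lem:char-hidden} (iv). For each $k$ there are a set $A_k$ with $\P(A_k)=1-\ga_k$ and a copula $c^{\rm tail}_k$ (possibly depending on $k$) such that $\P_{X^k}$ is the law of
\[
\tilde X^k_i:=F_{\mu_i}^{-1}\Big(\ga_k U_i\eins_{A_k^\cmp}+\big(\ga_k+(1-\ga_k)V^k_i\big)\eins_{A_k}\Big),\quad i=1,\ldots,n.
\]
Here $U=(U_1,\ldots,U_n)\sim c$ and $V^k\sim c^{\rm tail}_k$ are independent of $A_k$. Since only laws matter, I am free to realize all these objects on one atomless space with a single fixed $U\sim c$. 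I would then compare with $Y:=F_{\mu}^{-1}(U)$, which has law exactly $\pi$.

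On $A_k^\cmp$ we have $\tilde X^k_i=F_{\mu_i}^{-1}(\ga_kU_i)$. This converges to $F_{\mu_i}^{-1}(U_i)=Y_i$ pointwise as $\ga_k\uparrow 1$, because quantile functions are left-continuous and non-decreasing and $U_i\in(0,1)$ almost surely. Hence, for any $\de>0$,
\[
\P\big(|\tilde X^k-Y|>\de\big)\le \P(A_k)+\P\big(|F_\mu^{-1}(\ga_kU)-F_\mu^{-1}(U)|>\de\big).
\]
Both terms tend to zero: the first equals $1-\ga_k$, and the second vanishes by almost-sure convergence. Convergence in probability gives $\P_{X^k}=\P_{\tilde X^k}\to\P_Y=\pi$ weakly, which closes the argument.

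The main obstacle is that the tail copula $c^{\rm tail}_k$ is completely uncontrolled and differs from vector to vector. This is exactly why I argue via a contradiction sequence rather than a direct estimate. The approach works because the tail part lives on a set of vanishing probability $1-\ga_k$, so it cannot affect the weak limit. The remaining care point is justifying that the representation in Lemma \ref{lem:char-hidden} (iv) can be placed on a common probability space with the same $U$ for all $k$. That is legitimate because $d$ depends only on laws and $(\Om,\cF,\P)$ is atomless.
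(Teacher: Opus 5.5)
Your proof is correct and is essentially the paper's argument. The paper uses the same coupling in Remark~\ref{rem: boundeddist} (a common $U\sim c$, the tail part confined to a set of probability $1-\ga$, and left-continuity of quantile functions) to bound the Fortet--Mourier distance uniformly in $c^{\rm tail}$, and then turns consistency of $d$ into an $\ep$--$\de$ statement via metrization of the weak topology; your contradiction sequence with convergence in probability unrolls that $\ep$--$\de$ step and replaces the explicit Fortet--Mourier estimate by an equally valid qualitative one.
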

\begin{proof}
    Since the Fortet-Mourier metric $d_{\rm FM}$, cf.\ Example \ref{ex.probabdistance} c), metrizes the weak topology on $\cP(\R^n)$, for every $\ep>0$, there exists some $\de>0$ such that
    \[
     d_{\rm FM}\big(\P_X,c(F_{\mu_1},\ldots, F_{\mu_n})\big)\leq \de \quad\text{implies}\quad d\big(\P_X,c(F_{\mu_1},\ldots, F_{\mu_n})\big)\leq\ep.
    \]
    By Remark \ref{rem: boundeddist}, for all $\de>0$, there exists some $\ga\in (0,1)$ such that
    $$d_{\rm FM}\big(\P_X,c(F_{\mu_1},\ldots, F_{\mu_n})\big)\leq \de$$
   for every random vector $X=(X_1,\ldots X_n)$ with $(c,\ga)$-hidden dependence and $X_i\sim \mu_i$ for $i=1,\ldots, n$. The proof is complete.
\end{proof}

A combination of Theorem \ref{th.risk-aggregation-general} and Theorem \ref{th: boundeddistance} leads to our second main result for the computation of risk bounds.\ Namely, no matter how small the allowed perturbation from a given reference model is, there always exists $\gamma\in (0,1)$ such that the constraint expressed using a consistent probability distance does not help to reduce the risk measure worst-case scenario.

\begin{theorem}
 \label{th: constrainedwc}
           Let  $f\colon \R^n\to \R$ be non-decreasing and measurable, $\cM\subseteq \cP(\R)$ be nonempty, $\mu_1,\ldots, \mu_n\in \cM$, $c\colon [0,1]^n\to [0,1]$ be a copula, $\cX$ be an $(f,\mu_1,\ldots, \mu_n)$-aggregation-stable set of random variables, and $d\colon \cpl^n(\cM)\times \cpl^n(\cM)\to [0,\infty)$ be a consistent probability distance {at $c(F_{\mu_1},\ldots, F_{\mu_n})$}.\ Then, for all $\ep>0$, there exists $\ga\in (0,1)$ such that
\[\begin{aligned}
\sup \quad & \mathcal{R}\big (f(X) \big)&= \quad \sup \quad & \mathcal{R}\big (f(X) \big)\\
\textrm{s.t.} \quad & X_i\sim {\mu_i}, \textbf{ }i=1,\ldots,n,  & \textrm{s.t.} \quad & X_i\sim {\mu_i}, \textbf{ }i=1,\ldots,n\\
& d\big(\P_X,c(F_{\mu_1},\ldots, F_{\mu_n})\big)\leq \ep & &  \\
\end{aligned}\]
for every $\ga$-tail risk measure $\cR\colon \cX\to \R$.
\end{theorem}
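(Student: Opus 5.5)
The plan is to combine Theorem \ref{th: boundeddistance} with Theorem \ref{th.risk-aggregation-general}, applied to the ambiguity set
\[
\cC:=\big\{\pi\in \cpl^n(\cM)\,\big|\, d\big(\pi,c(F_{\mu_1},\ldots,F_{\mu_n})\big)\leq \ep\big\}.
\]
Fix $\ep>0$. First, Theorem \ref{th: boundeddistance} gives some $\ga\in(0,1)$ such that every random vector $X$ with $(c,\ga)$-hidden dependence and $X_i\sim\mu_i$ for $i=1,\ldots,n$ satisfies $d\big(\P_X,c(F_{\mu_1},\ldots,F_{\mu_n})\big)\leq\ep$. Such an $X$ has marginals $\mu_i\in\cM$, so $\P_X\in\cpl(\mu_1,\ldots,\mu_n)\subseteq\cpl^n(\cM)$. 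Hence $\P_X\in\cC$, and $\cC$ contains the $n$-variate distributions of all random vectors with $(c,\ga)$-hidden dependence and marginals $\mu_1,\ldots,\mu_n$.

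Second, this is exactly the hypothesis of Theorem \ref{th.risk-aggregation-general}. For this $\ga$ and for every $\ga$-tail risk measure $\cR\colon\cX\to\R$, with $\cX$ being $(f,\mu_1,\ldots,\mu_n)$-aggregation-stable, that theorem yields equality of the constrained supremum over $\{X_i\sim\mu_i,\ \P_X\in\cC\}$ and the unconstrained supremum over $\{X_i\sim\mu_i\}$. Since the constraint $\P_X\in\cC$, given $X_i\sim\mu_i$, is the same as the distance constraint in the statement, this is the claimed identity. The choice of $\ga$ depends only on $\ep$, $c$, $d$ and the marginals, not on $\cR$, so one $\ga$ works simultaneously for every $\ga$-tail risk measure, as the statement requires.

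The main point to verify is that Theorem \ref{th: boundeddistance} applies uniformly over all hidden-dependence vectors, including all admissible tail copulas $c^{\rm tail}$; its proof via Remark \ref{rem: boundeddist} indeed bounds the Fortet--Mourier distance independently of the tail copula. The only other issue is existence: the feasible set on each side must be nonempty so that the suprema are compared meaningfully. It is, because Lemma \ref{lem:char-hidden} (iii) provides a random vector with $(c,\ga)$-hidden dependence and the given marginals, since the probability space is atomless. Beyond these checks, the proof is a direct concatenation of the two earlier theorems.
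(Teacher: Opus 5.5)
Your proposal is correct and matches the paper's proof: Theorem~\ref{th: boundeddistance} supplies a $\ga$ for which every $(c,\ga)$-hidden-dependence coupling with marginals $\mu_1,\ldots,\mu_n$ lies in the $\ep$-ball, and Theorem~\ref{th.risk-aggregation-general} then gives the equality for every $\ga$-tail risk measure. Your extra nonemptiness check is not needed, since Theorem~\ref{th.risk-aggregation-general} already covers it; if you want to keep it, the existence of such vectors follows more directly from the construction in Proposition~\ref{th: cgammadominance} than from the characterization in Lemma~\ref{lem:char-hidden} (iii).
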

\begin{proof}
   Fix $\ep>0$. We know from Theorem \ref{th: boundeddistance} that there exists $\gamma\in(0,1)$ such that every random vector $X=(X_1,\ldots X_n)$ with $(c,\ga)$-hidden dependence and $X_i\sim \mu_i$ for $i=1,\ldots, n$ satisfies
$d\big(\P_X,c(F_{\mu_1},\ldots, F_{\mu_n})\big)\leq \ep$.\ The claim now follows from Theorem \ref{th.risk-aggregation-general}.
\end{proof}

While Theorem \ref{th: constrainedwc} presents a purely qualitative result, explicit values for $\gamma$ depending on $\ep>0$ and vice versa are given in Section \ref{sec.wasserstein}, below, for Wasserstein distances and can be derived from the estimate in Remark \ref{rem: boundeddist} for a variety of integral probability metrics.

\subsection{The Case of Dependence Measures}\label{sec.dependence.meas}

Theorem \ref{th: constrainedwc} considers the worst-case aggregate risk in the case where the partial information concerning dependence is described using a ball around a reference $n$-variate distribution.\ The general idea is to consider all models that do not deviate too much from the reference coupling as plausible.\ An alternative approach that has been considered in the literature on model uncertainty in the context of risk aggregation is to describe partial dependence information via dependence measures, cf.\ \cite{bernard2023impact,Bernruvan2017,KAAS2009}.\ 
Since each contribution usually chooses a specific dependence measure to study the related risk aggregation problem, one can naturally wonder if it is possible to establish results on the risk measures' worst-case scenarios that hold regardless of the specific dependence measure under consideration. The answer provided by the following corollary is affirmative and obtained using the connection between consistent probability distances and a broad class of dependence measures that are described as regular dependence measures in \cite{CorradoMaxJan}.

Example \ref{ex.probabdistance} e) already highlighted a first connection between consistent probability distances and regular dependence measures.\ More generally, for a nonempty set $\cM\subseteq\cP(\R)$, we consider a map $\rho\colon \cpl^n(\cM)\to \R$ with $\rho(\pi)=\lim_{k\to \infty}\rho(\pi^k)$ for all sequences $(\pi^k)_{k\in \N}\subset \cpl^n(\cM)$ with $\pi^k\to \pi\in  \cpl^n(\cM)$ and $\pi^k\circ \pr_i^{-1}=\pi\circ \pr_i^{-1}$ for all $k\in \N$ and $i=1,\ldots, n$.\ This is a slightly stronger property than regularity for a dependence measure in the sense of \cite{CorradoMaxJan}.\ However, using similar arguments as in \cite[Example 3 and Example 4]{CorradoMaxJan}, all examples presented there, in particular, all dependence measures based on the Pearson correlation, Spearman's rho, Kendall's tau, and transport dependencies actually satisfy this stronger property.\ Then,
 \[
 d(\pi_1,\pi_2):=\big|\rho(\pi_1)-\rho(\pi_2)\big|,\quad \text{for }\pi_1,\pi_2\in \cpl^n(\cM),
 \]
 defines a consistent probability distance.\ The next corollary embeds Theorem \ref{th: constrainedwc} into the case in which partial dependence information is described via a dependence measure $\rho$, satisfying the previously mentioned properties without necessarily requiring the normalizations $\rho(\mu_1\otimes \cdots\otimes \mu_n)=0$ for all $\mu_1,\ldots, \mu_n\in \cM$ and $\rho(\pi)\in [-1,1]$ for all $\pi\in \cpl^n(\cM)$.
\begin{corollary}
 \label{co: constraineddepmeasures}
           Let  $f\colon \R^n\to \R$ be non-decreasing and measurable, $\cM\subseteq\cP(\R)$ be nonempty, $\mu_1,\ldots, \mu_n\in \cM$, $c\colon [0,1]^n\to [0,1]$ be a copula, $\cX$ be an $(f,\mu_1,\ldots, \mu_n)$-aggregation-stable set of random variables, and $\rho \colon \cpl^n(\cM)\to \R$ be a map with $\rho(\pi)=\lim_{k\to \infty}\rho(\pi^k)$ for all sequences $(\pi^k)_{k\in \N}\subset \cpl^n(\cM)$ with $\pi^k\to \pi=c(F_{\mu_1},\ldots,F_{\mu_n})$ and $\pi^k\circ \pr_i^{-1}=\mu_i$ for all $k\in \N$ and $i=1,\ldots, n$.\ Then, for all $\ep>0$, there exists $\ga\in (0,1)$ such that
\[\begin{aligned}
\sup \quad & \mathcal{R}\big (f(X) \big)&= \quad \sup \quad & \mathcal{R}\big (f(X) \big)\\
\textrm{s.t.} \quad & X_i\sim {\mu_i}, \textbf{ }i=1,\ldots,n,  & \textrm{s.t.} \quad & X_i\sim {\mu_i}, \textbf{ }i=1,\ldots,n.\\
& \big|\rho\big(\P_X\big)-\rho\big(c(F_{\mu_1},\ldots, F_{\mu_n})\big)\big|\leq \ep & &  \\
\end{aligned}\]
\end{corollary}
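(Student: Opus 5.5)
The plan is to reduce Corollary~\ref{co: constraineddepmeasures} to Theorem~\ref{th: constrainedwc} by producing a consistent probability distance from $\rho$. Define
\[
d(\pi_1,\pi_2):=\big|\rho(\pi_1)-\rho(\pi_2)\big|,\quad\text{for }\pi_1,\pi_2\in \cpl^n(\cM).
\]
This is a map $\cpl^n(\cM)\times\cpl^n(\cM)\to[0,\infty)$; it is real-valued because $\rho$ is. Theorem~\ref{th: constrainedwc} only needs $d$ to be a consistent probability distance \emph{at} the reference coupling $\pi:=c(F_{\mu_1},\ldots,F_{\mu_n})$, so I will check only that, and not consistency everywhere.

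First I must confirm that $\pi\in\cpl^n(\cM)$. The measure with distribution function $c(F_{\mu_1},\ldots,F_{\mu_n})$ has marginals $\mu_1,\ldots,\mu_n\in\cM$ by Sklar's theorem, so this holds. Next, take any sequence $(\pi_k)_{k\in\N}\subset\cpl^n(\cM)$ with $\pi_k\to\pi$ weakly and $\pi_k\circ\pr_i^{-1}=\pi\circ\pr_i^{-1}=\mu_i$ for all $k$ and $i$. The continuity assumption on $\rho$ in the statement applies to exactly such sequences and gives $\rho(\pi_k)\to\rho(\pi)$. Hence $d(\pi_k,\pi)=|\rho(\pi_k)-\rho(\pi)|\to0$, so $d$ is a consistent probability distance at $\pi$.

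Applying Theorem~\ref{th: constrainedwc} with this $d$, I get that for each $\ep>0$ there is $\ga\in(0,1)$ for which the supremum of $\cR(f(X))$ under the marginal constraints together with $d(\P_X,\pi)\le\ep$ equals the unconstrained supremum, for every $\ga$-tail risk measure $\cR\colon\cX\to\R$. By the definition of $d$, the constraint $d(\P_X,\pi)\le\ep$ is literally $|\rho(\P_X)-\rho(c(F_{\mu_1},\ldots,F_{\mu_n}))|\le\ep$, which is the claim.

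The step that needs the most care is the domain bookkeeping. The distributions $\P_X$ of vectors with $X_i\sim\mu_i$ lie in $\cpl(\mu_1,\ldots,\mu_n)\subseteq\cpl^n(\cM)$, so $\rho(\P_X)$ is defined and the two optimization problems match exactly. Beyond that, the corollary inherits everything from Theorem~\ref{th: constrainedwc}, whose proof runs through Theorem~\ref{th: boundeddistance} using only consistency at $\pi$. I therefore expect no genuine obstacle.
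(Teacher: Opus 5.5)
Your proposal is correct and matches the paper's argument: the paper likewise sets $d(\pi_1,\pi_2):=|\rho(\pi_1)-\rho(\pi_2)|$, notes that the continuity hypothesis on $\rho$ at $c(F_{\mu_1},\ldots,F_{\mu_n})$ makes $d$ a consistent probability distance at that coupling, and reads the corollary off from Theorem~\ref{th: constrainedwc}. Your extra checks are exactly the bookkeeping the paper leaves implicit: that the reference measure lies in $\cpl^n(\cM)$ by Sklar's theorem, that only consistency at the reference point is needed, and that $\P_X\in\cpl^n(\cM)$.
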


Again Corollary \ref{co: constraineddepmeasures} presents a purely qualitative result, and we refer to Section \ref{sec.dependendecomputation} for explicit analytic bounds and guarantees for $\ga\in (0,1)$ and $\ep>0$, respectively.

\section{Value at Risk and Expectiles under Hidden Dependence}\label{sec:varandexpectile}

The results obtained so far have highlighted the prominent role of $(c,\gamma)$-hidden dependence in the context of risk aggregation problems involving $\gamma$-tail risk measures. We now show that, in some cases, the same can be true also for risk measures beyond this class.\ In a first step, we study the left-continuous $\al$-VaR of the aggregate position $f(X)$ for a random vector $X=(X_1,\ldots, X_n)$ with $(c,\ga)$-hidden dependence and $0<\al\leq \ga<1$.\ We point out that, for such values of $\alpha$, the $\al$-VaR is \textit{not} a $\ga$-tail risk measure. The following lemma collects several properties of $\ga$-tail events and will be used in the proofs of this section.\ The statement in a) is implicitly given in \cite[Lemma A.3]{WangZitikis2020}, albeit in a slightly different form. Parts b) and c), which are closely tied, are given in \cite[Equation (11)]{WangZitikis2020} without a proof.\ For the sake of a self-contained exposition, we therefore state these results in our setup and provide a short proof.

\begin{lemma}\label{lem: tailevent}
    Let $Z$ be a random variable, $\ga\in (0,1)$, and $A\in \cF$ be a $\ga$-tail event for $Z$.\
    \begin{enumerate}
    \item[a)] It holds
    \begin{equation}\label{eq.estimate.esssup}
    \VaR^\ga(Z)=\esssup Z(A^\cmp)\leq \sup Z(A^\cmp),
    \end{equation}
where
\[
\esssup Z(A^\cmp):=\inf\big\{ a\in \R  \,\big|\, \P( Z> a\,|\, A^\cmp)=0\big\}.
\]
\item[b)] For all $a\in \R$, \begin{equation}\label{eq.cond.distribution}
 \P(Z\leq a\,|\, A^\cmp)=\frac{F_{Z}(a)\wedge\ga }{\ga}\quad\text{and}\quad \P(Z\leq a\,|\, A)= \frac{F_{Z}(a)-\ga}{1-\ga}\vee 0.
 \end{equation}
 \item[c)] For all $\al\in (0,1)$,
\begin{equation}\label{eq.cond.var}
  \VaR^{\al}(Z\,|\,A^\cmp)=F_Z^{-1}(\ga \al)\quad \text{and}\quad \VaR^{\al}(Z\,|\,A)=F_Z^{-1}\big(\ga+(1-\ga)\al\big).
 \end{equation}
\end{enumerate}
\end{lemma}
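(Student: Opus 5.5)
The plan is to prove a) first, since b) and c) follow from it together with the distributional properties of $Z$.

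For a), set $s:=\sup Z(A^\cmp)$ and $e:=\esssup Z(A^\cmp)$; the inequality $e\le s$ is immediate. By the tail-event property \eqref{eq.ga-tailevent}, $Z\ge s\ge e$ on $A$, while $Z\le e$ almost surely on $A^\cmp$. This gives $\P(Z>e)\le \P(A)=1-\ga$, and hence $\VaR^\ga(Z)\le e$. For the reverse inequality, take $a<e$. Then $\P(\{Z>a\}\cap A^\cmp)>0$ by the definition of $e$. On $A$ we have $Z\ge s\ge e>a$, so $A\subseteq\{Z>a\}$. Therefore $\P(Z>a)>1-\ga$, and so $a<\VaR^\ga(Z)$. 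Together these give $\VaR^\ga(Z)=e$. (If $e$ were infinite, then $s$ would be $+\infty$; but $A$ is nonempty with positive probability and $Z$ is real-valued on $A$, so this cannot occur.)

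For b), write $q:=\VaR^\ga(Z)=e$ and split into two cases. If $a\ge q$, then $\{Z\le a\}\supseteq A^\cmp$ up to a null set, so $\P(Z\le a\,|\,A^\cmp)=1$. Also $F_Z(a)\ge\ga$ by the definition of the quantile, so the right-hand side equals $1$ as well. If $a<q$, then $a<s$, so $Z>a$ on $A$ and hence $\{Z\le a\}\subseteq A^\cmp$. This gives $\P(Z\le a\,|\,A^\cmp)=F_Z(a)/\ga$. Moreover $F_Z(a)\le\ga$ because $a<q$ means $\P(Z>a)>1-\ga$. The formula for $A$ then follows from $F_Z(a)=\ga\P(Z\le a|A^\cmp)+(1-\ga)\P(Z\le a|A)$, after substituting the first formula.

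For c), I would invert the conditional distribution functions from b). Conditionally on $A^\cmp$, the left-continuous quantile at level $\al$ is $\inf\{a: (F_Z(a)\wedge\ga)/\ga\ge\al\}$. Since $\ga\al\le\ga$, this equals $\inf\{a: F_Z(a)\ge\ga\al\}=F_Z^{-1}(\ga\al)$. Similarly, on $A$ the condition $(F_Z(a)-\ga)/(1-\ga)\vee0\ge\al$ with $\al>0$ is equivalent to $F_Z(a)\ge \ga+(1-\ga)\al$. Here I will use the equivalent form of the quantile $\inf\{a:F(a)\ge u\}$, which matches the paper's definition $\inf\{a:\P(Z>a)\le 1-u\}$.

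The only delicate point is the boundary handling in a): showing that $Z\ge e$ on all of $A$ (not merely almost surely), and that ties at the level $q$ do not break the case split in b). These are handled by the pointwise inequality in \eqref{eq.ga-tailevent} together with $e\le s$.
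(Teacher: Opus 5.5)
Your proof is correct and follows the same route as the paper. In a) you compare the sets defining $\VaR^\ga(Z)$ and $\esssup Z(A^\cmp)$, using $A\subseteq\{Z>a\}$ below the threshold; in b) you split at $\VaR^\ga(Z)$, getting the formula on $A$ by total probability rather than a second direct computation; and in c) you invert the conditional distribution functions exactly as the paper does. Your reverse inequality in a), which passes from $a<\esssup Z(A^\cmp)$ to $\P(Z>a)>1-\ga$ and hence $a<\VaR^\ga(Z)$, is the correct argument for $\esssup Z(A^\cmp)\le\VaR^\ga(Z)$; the paper's second computation starts from $a<\VaR^\ga(Z)$ and, as written, only re-establishes $\VaR^\ga(Z)\le\esssup Z(A^\cmp)$, so your version actually closes that step.
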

\begin{proof}\
\begin{enumerate}
\item[a)] Since $\{Z>\sup Z(A^\cmp)\}\cap A^\cmp=\emptyset$, it follows that 
\begin{equation}\label{eq.estimate.leftvar}
\esssup Z(A^\cmp)\leq \sup Z(A^\cmp).
\end{equation}
Now, let $a\in \R$ with $\P(Z>a\,|\, A^\cmp)=0$.\ Then,
\[
\P(Z>a)=\P\big(\{Z>a\}\cap A\big)\leq \P(A)=1-\ga,
\]
 so that, by definition of $\VaR^\ga (Z)$ and $\esssup Z(A^\cmp)$, we find that
 \[
\VaR^\ga(Z)\leq \esssup Z(A^\cmp)\leq \sup Z(A^\cmp).
 \]
Hence, by \eqref{eq.ga-tailevent} and the definition of $\VaR^\ga (Z)$, for $a\in \R$ with $a<\VaR^\ga(Z)$, 
 \[
  \ga \P(Z>a\,|\, A^\cmp)=\P\big(\{Z>a\}\setminus A\big)=\P(Z>a)-\P(A)= \P(Z>a)-(1-\ga)>0, 
 \]
which implies that
\[
\esssup Z(A^\cmp)\leq \VaR^\ga(Z).
\]
\item[b)] Let $a\in \R$.\ If $a\geq \VaR^\ga(Z)$, then $\P(Z\leq a\,|\, A^\cmp)=1$ by \eqref{eq.estimate.esssup}, so that
 \[
  \P\big(Z\leq a\,|\, A\big)=\frac{\P(Z\leq a)-\P\big(\{Z\leq a\}\cap A^\cmp\big)}{1-\ga}=\frac{F_Z(a)-\ga}{1-\ga}.
 \]
 On the other hand, if $a<\VaR^\ga(Z)$.\ Then, $Z(\om)\leq a$ implies $\om \in A^\cmp$ for all $\om\in \Om$, so that $\P(Z\leq a\,|\, A)=\frac{\P(\emptyset)}{1-\ga}=0$ and
 \[
 \P\big(Z\leq a\,|\, A^\cmp\big)=\frac{\P\big(\{Z\leq a\}\cap A^\cmp\big)}{\ga}=\frac{\P(Z\leq a)}{\ga}=\frac{F_Z(a)}{\ga}.
 \]
\item[c)] Let $\al\in (0,1)$.\ Then, by part b), for all $a\in \R$,
    \[
     \P(Z\leq a\,|\, A^\cmp)\geq \al\quad \text{if and only if}\quad F_Z(a)\geq \ga\al
    \]
    and
      \[
     \P(Z\leq a\,|\, A)\geq \al\quad \text{if and only if}\quad F_Z(a)\geq \ga+(1-\ga)\al.
    \]
 Hence,
 \[
  \VaR^{\al}(Z\,|\,A^\cmp)=\inf\big\{a\in \R\,\big|\, \P(Z\leq a\,|\, A^\cmp)\geq \al\big\}=\inf\big\{a\in \R\,\big|\, F_Z(a)\geq \ga \al\big\}=F_Z^{-1}(\ga \al)
 \]
 and
 \begin{align*}
 \VaR^{\al}(Z\,|\,A)&=\inf\big\{a\in \R\,\big|\, \P(Z\leq a\,|\, A)\geq \al\big\}=\inf\big\{a\in \R\,\big|\, F_Z(a)\geq \ga+(1-\ga) \al\big\}\\
 &=F_Z^{-1}\big(\ga+(1-\ga) \al\big).
 \end{align*}
 \end{enumerate}
\end{proof}

Recall that a survival copula $\overline{c}:[0,1]^n \to [0,1]$ of a vector $X=(X_1,\ldots,X_n)$ is a function that satisfies \[ \P(X>a)=\overline{c}\big(\P(X_1>a_1), \ldots,\P(X_n>a_n) \big)\]
for any $a \in \R^n$.\ We refer to \cite[Section 4.4]{zbMATH05080942} for a structured introduction to survival copulas and their applications in finance. Given a copula $c\colon [0,1]^n\to [0,1]$, its survival copula $\overline c$ can be expressed in terms of $c$ via
\begin{equation}\label{eq.survival}
 \overline c(v)=\int_{1-v_1}^1\cdots \int_{1-v_n}^1 \,c(\d u_1,\ldots, \d u_n)\quad \text{for }v=(v_1,\ldots, v_n)\in [0,1]^n,
\end{equation}
cf.\ \cite[Theorem 4.7]{zbMATH05080942}.

The next proposition shows that, if a vector $X$ has $(c,\gamma)$-hidden dependence and the corresponding survival copula satisfies $\overline c(u,\ldots, u)>0$ for all $u\in (0,1)$, the left-continuous $\ga$-Value-at-Risk VaR$^\gamma$ of the aggregate position $f(X)$ coincides with one attained in the case where the risks behave co-monotonically.\ 
\begin{proposition}\label{thm.comonotonic.main-1}
 Let $f\colon \R^n\to \R$ be non-decreasing and Borel measurable, $c\colon [0,1]^n\to [0,1]$ be a copula, $\ga\in (0,1)$, and $X=(X_1,\ldots, X_n)$ be a random vector with $(c,\ga)$-hidden dependence and common $\ga$-tail event $A\in \cF$.
 \begin{enumerate}
    \item[a)] For all $\al\in (0,\ga)$,
 \begin{equation}\label{eq.var-alpha-small}
    \VaR^\al\big(f(X)\big)=\VaR^{\frac\al\ga}\big( f(X)\,\big|\, A^\cmp\big).
     \end{equation}
    \item[b)] If $f$ is additionally left-continuous and $\overline c(u,\ldots, u)>0$ for all $u\in (0,1)$, then
     \[
  \VaR^\ga\big(f(X)\big)=f \big(\VaR^\ga (X_1),\ldots,\VaR^\ga (X_n)\big).
 \]
 \end{enumerate}
\end{proposition}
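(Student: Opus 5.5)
For part a), I will use that $A$ is a $\ga$-tail event for $Z:=f(X)$ by Remark \ref{rem.tailevent.aggregate}, and then apply Lemma \ref{lem: tailevent} c) to $Z$. Taking the first identity in \eqref{eq.cond.var} with level $\al/\ga\in(0,1)$ in place of $\al$ gives
\[
\VaR^{\frac\al\ga}(Z\,|\,A^\cmp)=F_Z^{-1}\big(\ga\cdot\tfrac\al\ga\big)=F_Z^{-1}(\al)=\VaR^\al(Z).
\]
So part a) is immediate from earlier results. Nothing beyond checking $\al/\ga\in(0,1)$ is needed.

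For part b), write $q_i:=\VaR^\ga(X_i)$ and $q:=(q_1,\ldots,q_n)$. The upper bound $\VaR^\ga(f(X))\le f(q)$ should follow from Lemma \ref{lem: tailevent} a) applied to each $X_i$. Indeed, $X_i\le \esssup X_i(A^\cmp)=q_i$ a.s.\ on $A^\cmp$, so $f(X)\le f(q)$ a.s.\ on $A^\cmp$ by monotonicity. Hence $\P(f(X)>f(q))\le \P(A)=1-\ga$, which gives $\VaR^\ga(f(X))\le f(q)$.

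For the lower bound, I need $\P(f(X)>a)>1-\ga$ for every $a<f(q)$. By left-continuity of $f$, for such $a$ there is $y<q$ (componentwise) with $f(y)>a$. Then
\[
\{f(X)>a\}\supseteq A\cup\big(A^\cmp\cap\{X>y\}\big).
\]
This holds because on $A$ we have $X_i\ge \sup X_i(A^\cmp)\ge q_i>y_i$, and on $\{X>y\}$ we get $f(X)\ge f(y)>a$. Hence
\[
\P(f(X)>a)\ge 1-\ga+\ga\,\P(X>y\,|\,A^\cmp),
\]
and it suffices to show $\P(X>y\,|\,A^\cmp)>0$. By Definition \ref{def.hiddendep}, the conditional law of $X$ given $A^\cmp$ has copula $c$. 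Its marginals are given by Lemma \ref{lem: tailevent} b): $\P(X_i>y_i\,|\,A^\cmp)=1-\frac{F_{X_i}(y_i)\wedge\ga}{\ga}=:v_i$.

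Since $y_i<q_i=F_{X_i}^{-1}(\ga)$, we get $F_{X_i}(y_i)<\ga$, hence $v_i>0$. The survival function of the conditional law is then $\overline c(v_1,\ldots,v_n)$, where I will justify via \eqref{eq.survival} that the survival copula of a vector with copula $c$ is $\overline c$. Monotonicity of $\overline c$ gives $\overline c(v)\ge \overline c(u,\ldots,u)$ with $u:=\min_i v_i\in(0,1]$. This is positive by assumption when $u<1$, and also when $u=1$ (again by monotonicity, using any smaller $u$). This yields the lower bound.

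I expect the main obstacle to be this last step. It requires the survival-copula representation of the conditional law, which needs care when the conditional marginals have atoms. I would handle it by using the representation in Lemma \ref{lem:char-hidden} (ii), where $X_i=F_{X_i}^{-1}(\ga U_i)$ on $A^\cmp$ with $(U_i)\sim c$. Then $\{X_i>y_i\}\supseteq\{U_i>F_{X_i}(y_i)/\ga\}$, so the probability is at least $\overline c(v)$ directly from \eqref{eq.survival}, with no continuity assumption needed.
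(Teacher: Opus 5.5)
Your proof is correct and follows essentially the paper's route. Part a) goes through Remark~\ref{rem.tailevent.aggregate} and \eqref{eq.cond.var}. Part b) uses $f(X)\le f(q)$ a.s.\ on $A^\cmp$ for the upper bound, and for the lower bound combines left-continuity with positivity of $\overline c$ on the diagonal to show that $f(X)$ exceeds any $a<f(q)$ with positive probability on $A^\cmp$. The differences are minor. The paper approximates $q$ along the quantile curve $\al\mapsto(\VaR^\al(X_1),\ldots,\VaR^\al(X_n))$ and phrases the lower bound through $\VaR^\ga(f(X))=\esssup f(X)(A^\cmp)$. You instead take an arbitrary $y<q$ and use the representation $X_i=F_{X_i}^{-1}(\ga U_i)$ on $A^\cmp$ with strict inequalities, which makes the survival-copula step rigorous even when the marginals have atoms; the paper's use of the events $\{X_i\geq \VaR^\al(X_i)\}$ glosses over this point.
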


\begin{proof}
Part a) is a direct consequence of \eqref{eq.cond.var} together with Remark \ref{rem.tailevent.aggregate}.\ Let
\[
 q_X(\al):= f \big(\VaR^\al (X_1),\ldots,\VaR^\al(X_n) \big)
 \]
 for $\al\in (0,1)$.\ By part a) and \eqref{eq.estimate.esssup} together with the fact that $f$ is non-decreasing, it follows that
\[
\VaR^\ga \big(f(X)\big)=\lim_{\al\uparrow \ga} \VaR^\al \big(f(X)\big)=\lim_{\al\uparrow \ga}\VaR^{\frac\al\ga}\big(f(X)\,\big|\,A^\cmp\big)\leq q_X(\ga),
\]
and it remains to show that $q_X(\ga)\leq \VaR^\ga\big(f(X)\big)$.\ To that end, let $a\in \R$ with $a<q_X(\ga)$.\ Then, using the left-continuity of $q$, there exists some $\al\in (0,\ga)$ with $a< q_X(\al)\leq q_X(\ga)$.\ Hence, using the fact that $\overline c(u,\ldots, u)>0$ for all $u\in (0,1)$ together with part a),
\begin{align*}
\P\big(f(X)>a\,\big|\, A^\cmp\big)&\geq \P\big(f(X)\geq q_X(\al)\,\big|\, A^\cmp\big)\geq\P\bigg(\bigcap_{i=1}^n \big\{X_i\geq \VaR^\al(X_i)\big\} \,\bigg|\, A^ \cmp\bigg)\\
&\geq \overline c\Big(\P\big(X_1\geq \VaR^\al(X_1)\,\big|\, A^\cmp\big),\ldots, \P\big(X_n\geq \VaR^\al(X_n)\,\big|\, A^\cmp\big)\Big)\\
&= \overline c\Big(\P\big(X_1\geq \VaR^{\frac\al\ga}(X_1\,|\, A^\cmp)\,\big|\, A^\cmp\big),\ldots, \P\big(X_n\geq \VaR^{\frac\al\ga}(X_n\,|\, A^\cmp)\,\big|\, A^\cmp\big)\Big)\\
&\geq \overline c\bigg(1-\frac{\al}{\ga},\ldots, 1-\frac{\al}{\ga}\bigg)>0.
\end{align*}
By Remark \ref{rem.tailevent.aggregate} and \eqref{eq.estimate.esssup}, we thus obtain
 \[
    \VaR^\ga\big(f(X)\big)=\inf\big\{a\in \R\,\big|\, \P\big(f(X)>a\,\big|\, A^\cmp\big)=0\big\}\geq q_X(\ga).
 \]
 The proof is complete.
\end{proof}

We conclude this section with an application of Proposition \ref{th: cgammadominance} in the context of expectiles, cf.\ \cite{MR4216328}. For a random variable $Z$ on $\Om$ with $\E(|Z|)<\infty$ and $\al\in (0,1)$, the \textit{$\al$-expectile} ${\rm ex}^\al(Z)\in \R$ of $Z$ is defined as the unique solution to the equation
\[
\al\E\Big(\big(Z-{\rm ex}^\al(Z)\big)_+\Big)=(1-\al)\E\Big(\big(Z-{\rm ex}^\al(Z)\big)_-\Big).
\]
Note that, for $\al=\frac12$, the expectile coincides with the expectation, i.e., ${\rm ex}^{\frac12}(Z)=\E(Z)$.\ We stress that expectiles are law-invariant risk measures that have attracted much attention in the literature, but they are not $\ga$-tail risk measures for any $\gamma\in(0,1)$.\ We refer to \cite{bellini2017risk, delbaen2013expectiles,MR3479327} for a detailed and axiomatic study of expectiles as well as their interpretation in a financial context.\

In the sequel, for $\al\in (0,1)$, we use the notation ${\rm ex}^\al(\mu)$ and $\VaR^\al(\mu)$ for the $\al$-expectile ${\rm ex}^\al(Z)$ and the left $\al$-value at risk $\VaR^\al(Z)$ of a $\mu$-distributed random variable $Z$, respectively.\ As a consequence of \cite[Theorem 3]{MR4216328}, we have the following result.\ The proof follows a strategy similar to that of \cite[Theorem 16]{CorradoMaxJan}.

\begin{theorem}\label{thm: expectiles}
 Let $\mu\in \cP(\R)$ with $\int_{\R} |z|\,\mu(\d z)<\infty$, $c\colon [0,1]^n\to [0,1]$ be a copula, $\ga\in (0,1)$, and assume that $\cC\subseteq \cP(\R^n)$ contains the $n$-variate distributions of all random vectors with $(c,\ga)$-hidden dependence and identically $\mu$-distributed marginals.\ Then, for all $\la_1,\ldots, \la_n\in [0,\infty)$ and all $\al\in \big[\frac12,1\big)$ with ${\rm ex}^\al(\mu)\geq \VaR^\ga(\mu)$,
\[\begin{aligned}
\sup \quad & {\rm ex}^\al\bigg (\sum_{i=1}^n\la_i X_i \bigg)&= \quad  {\rm ex}^\al(\mu) \sum_{i=1}^n\la_i.\\
\textrm{s.t.} \quad & X_i\sim \mu, \textbf{ }i=1,\ldots,n,  & &\\
& \P_X\in \cC & &  \\
\end{aligned}\]
\end{theorem}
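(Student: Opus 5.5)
The plan is to prove the two inequalities separately. The upper bound comes from coherence of expectiles. The lower bound comes from an explicit vector with $(c,\ga)$-hidden dependence whose tail part is co-monotonic, for which the defining first-order condition of the expectile can be checked directly.

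\emph{Upper bound.} For $\al\in[\frac12,1)$ the expectile ${\rm ex}^\al$ is a coherent, law-invariant risk measure on integrable random variables, cf.\ \cite{bellini2017risk, delbaen2013expectiles}; in particular it is subadditive and positively homogeneous. Hence for every admissible $X$ with $X_i\sim\mu$,
\[
{\rm ex}^\al\Big(\sum_{i=1}^n\la_iX_i\Big)\le\sum_{i=1}^n\la_i\,{\rm ex}^\al(X_i)={\rm ex}^\al(\mu)\sum_{i=1}^n\la_i .
\]
This holds regardless of $\cC$.

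\emph{Construction for the lower bound.} Set $\Lambda:=\sum_i\la_i$; the case $\Lambda=0$ is trivial, so assume $\Lambda>0$. By Lemma \ref{lem:char-hidden} (iv), applied with $c^{\rm tail}$ the co-monotonic copula, there is a vector $X$ with $(c,\ga)$-hidden dependence, marginals $\mu$, and common $\ga$-tail event $A$. Concretely,
\[
X_i=F_\mu^{-1}\big(\ga U_i\eins_{A^\cmp}+(\ga+(1-\ga)V)\eins_A\big),
\]
with one and the same $V$ for all $i$. By assumption $\P_X\in\cC$. Put $S:=\sum_i\la_iX_i$, $x:={\rm ex}^\al(\mu)$, and let $Z\sim\mu$ be distributed like $X_1$.

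\emph{Checking the first-order condition.} The goal is to verify
\[
\al\E\big((S-\Lambda x)_+\big)=(1-\al)\E\big((S-\Lambda x)_-\big),
\]
which by uniqueness of the expectile gives ${\rm ex}^\al(S)=\Lambda x$.
\begin{itemize}
\item[(1)] \emph{Positive part.} On $A^\cmp$ we have $X_i\le\VaR^\ga(\mu)\le x$ by \eqref{eq.estimate.esssup} and the hypothesis ${\rm ex}^\al(\mu)\ge\VaR^\ga(\mu)$. Hence $(S-\Lambda x)_+=0$ on $A^\cmp$, and likewise $(X_1-x)_+=0$ on $A^\cmp$. On $A$ all $X_i$ coincide with $X_1$, so $S=\Lambda X_1$ there. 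Together these give $\E\big((S-\Lambda x)_+\big)=\Lambda\E\big((Z-x)_+\big)$.
\item[(2)] \emph{Negative part.} Use the identity $y_-=y_+-y$ and linearity, $\E(S-\Lambda x)=\Lambda(\E Z-x)$. This yields $\E\big((S-\Lambda x)_-\big)=\Lambda\E\big((Z-x)_-\big)$.
\end{itemize}
Multiplying the defining equation of $x={\rm ex}^\al(\mu)$ by $\Lambda$ then gives exactly the required identity, so the supremum is attained.

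The only delicate point is step (1). It needs the bound $X_i\le\VaR^\ga(\mu)$ almost surely on $A^\cmp$, which is where the tail-event property of $A$ and the hypothesis ${\rm ex}^\al(\mu)\ge\VaR^\ga(\mu)$ enter. Without it, the body part of $S$ could contribute to $(S-\Lambda x)_+$ sub-additively, and the equality would fail.
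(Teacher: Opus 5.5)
Your proof is correct. It uses the same upper bound as the paper (sublinearity of ${\rm ex}^\al$ for $\al\ge\frac12$) and the same extremal vector: $(c,\ga)$-hidden dependence with co-monotonic $c^{\rm tail}$, so that $X_i=X_1$ on $A$ and $X_i\le\VaR^\ga(\mu)\le{\rm ex}^\al(\mu)$ almost surely on $A^\cmp$.

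The difference is in how you conclude that ${\rm ex}^\al(\sum_i\la_iX_i)=\Lambda\,{\rm ex}^\al(\mu)$ for this vector. The paper invokes the additivity criterion \cite[Theorem 3]{MR4216328}. This reduces the claim to showing that $X_i-{\rm ex}^\al(\mu)$ and $X_j-{\rm ex}^\al(\mu)$ never have opposite signs with positive probability. You instead verify the defining first-order equation directly: you split the positive part over $A$ and $A^\cmp$, and recover the negative part from $y_-=y_+-y$ and linearity of expectation. This computation is in effect a direct proof, for this particular vector, of the sufficiency direction of the criterion the paper cites.

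What your route buys:
\begin{itemize}
\item It is self-contained.
\item It handles $\al=\frac12$ without the separate case the paper needs.
\item It shows that the equality for the constructed vector holds for every $\al\in(0,1)$ with ${\rm ex}^\al(\mu)\ge\VaR^\ga(\mu)$, so $\al\ge\frac12$ is used only for the upper bound.
\end{itemize}
The paper's route is shorter given the citation. It also names the structural reason the supremum is attained: all components lie on the same side of ${\rm ex}^\al(\mu)$ almost surely, which is the expectile analogue of co-monotonic additivity.
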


\begin{proof}
 For $\al=\frac12$ the statement is clear since, in this case, the $\al$-expectile is the mean.\ Let $\la_1,\ldots, \la_n\in [0,\infty)$ and $\al\in \big(\frac12,1\big)$ with ${\rm ex}^\al(\mu)\geq \VaR^\ga(\mu)$.\ Since $\al\geq \frac12$, it follows that ${\rm ex}^\al$ is sublinear, so that the left-hand side is less than or equal to the right-hand side.\ Let $X=(X_1,\ldots, X_n)$ be a random vector with $(c,\ga)$-hidden dependence, common $\ga$-tail event $A\in \cF$, and co-monotonic copula $c^{\rm tail}$ in Lemma \ref{lem:char-hidden}, i.e., $X_i=X_1$ $\P$-a.s.\ on $A$ for $i=1,\ldots, n$.\ We will prove that
 \[
  {\rm ex}^\al\bigg (\sum_{i=1}^n\la_i X_i \bigg)= \quad  {\rm ex}^\al(\mu) \sum_{i=1}^n\la_i.
 \]
 By \cite[Theorem 3]{MR4216328}, it remains to show that
\begin{equation}\label{eq.thm.expectiles}
  \P\Big(\big(\la_i X_i-{\rm ex}^\al(\la_iX_i)\big)\big(\la_j X_j-{\rm ex}^\al(\la_jX_j)\big)<0\Big)=0\quad\text{for }i,j=1,\ldots, n. 
 \end{equation}
 Since ${\rm ex}^\al$ is law-invariant and positively homogeneous, and $X_i\sim \mu$ for $i=1,\ldots, n$, this is equivalent to showing that
\[
\P\Big(\big(X_i-{\rm ex}^\al(\mu)\big)\big(X_j-{\rm ex}^\al(\mu)\big)<0\Big)=0 \quad\text{for }i,j=1,\ldots, n. 
\]
To that end, first observe that
\begin{align*}
\P\Big(\big(X_i-{\rm ex}^\al(\mu)\big)\big(X_j-{\rm ex}^\al(\mu)\big)<0\Big)&= \P\big(\{X_i<{\rm ex}^\al(\mu)\}\cap \{X_j>{\rm ex}^\al(\mu)\}\big)\\
&\quad +\P\big(\{X_i>{\rm ex}^\al(\mu)\}\cap \{X_j<{\rm ex}^\al(\mu)\}\big) 
\end{align*}
for $i,j=1,\ldots, n$.\ Hence, it remains to show that
\[
\P\big(\{X_i<{\rm ex}^\al(\mu)\}\cap \{X_j>{\rm ex}^\al(\mu)\}\big)=0\quad\text{for }i,j=1,\ldots, n.
\]
Since ${\rm ex}^\al(\mu)\geq \VaR^\ga(\mu)$, by \eqref{eq.estimate.esssup}, it follows that
\begin{align*}
\P\big(\{X_i<{\rm ex}^\al(\mu)\}\cap & \{X_j>{\rm ex}^\al(\mu)\}\big)=\P\big(\{X_i<{\rm ex}^\al(\mu)\}\cap \{X_j>{\rm ex}^\al(\mu)\}\cap A\big)\\
&\; =\P\big(\big\{X_1<{\rm ex}^\al(\mu)\big\}\cap \big\{X_1>{\rm ex}^\al(\mu)\big\}\cap A\big)=\P(\emptyset)=0
\end{align*}
for $i,j=1,\ldots, n$. We have therefore proved the validity of \eqref{eq.thm.expectiles}, and the statement now follows from \cite[Theorem 3]{MR4216328}.
\end{proof}
A combination of Theorem \ref{th: boundeddistance} and Theorem \ref{thm: expectiles} yields the following corollary.

\begin{corollary}
 \label{cor: expectiles}
           Let $\cM\subseteq \cP(\R)$ be nonempty, $\mu\in \cM$ with $\int_{\R} |z|\,\mu(\d z)<\infty$, $c\colon [0,1]^n\to [0,1]$ be a copula, and $d\colon \cpl^n(\cM)\times \cpl^n(\cM)\to [0,\infty)$ be a consistent probability distance {at $c(F_{\mu},\ldots, F_{\mu})$}.\ Then, for all $\ep>0$, there exists $\ga\in (0,1)$ such that
\[\begin{aligned}
\sup \quad & {\rm ex}^\al\bigg (\sum_{i=1}^n\la_i X_i \bigg)&= \quad  {\rm ex}^\al(\mu) \sum_{i=1}^n\la_i\\
\textrm{s.t.} \quad & X_i\sim \mu, \textbf{ }i=1,\ldots,n,  & &\\
& d\big(\P_X,c(F_\mu,\ldots, F_\mu)\big)\leq \ep & &  \\
\end{aligned}\]
for all $\la_1,\ldots, \la_n\in [0,\infty)$ and all $\al\in \big[\frac12,1\big)$ with ${\rm ex}^\al(\mu)\geq \VaR^\ga(\mu)$.
\end{corollary}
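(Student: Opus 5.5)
The plan is to combine Theorem \ref{th: boundeddistance} with Theorem \ref{thm: expectiles}, in the same way Theorem \ref{th: constrainedwc} combines Theorem \ref{th: boundeddistance} with Theorem \ref{th.risk-aggregation-general}.

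First, fix $\ep>0$ and apply Theorem \ref{th: boundeddistance} with $\mu_1=\cdots=\mu_n=\mu\in\cM$ and the consistent probability distance $d$ at $c(F_\mu,\ldots,F_\mu)$. This gives some $\ga\in(0,1)$ such that every random vector $X$ with $(c,\ga)$-hidden dependence and $X_i\sim\mu$ satisfies $d\big(\P_X,c(F_\mu,\ldots,F_\mu)\big)\le\ep$. Since these marginals lie in $\cM$, $\P_X\in\cpl^n(\cM)$.

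Second, with this $\ga$, define
\[
\cC:=\big\{\pi\in\cpl^n(\cM)\,\big|\,d\big(\pi,c(F_\mu,\ldots,F_\mu)\big)\le\ep\big\}.
\]
By the first step, $\cC$ contains the $n$-variate distributions of all random vectors with $(c,\ga)$-hidden dependence and identically $\mu$-distributed marginals. The hypotheses of Theorem \ref{thm: expectiles} are then satisfied: $\mu$ has a finite first moment, $c$ is a copula, and $\ga\in(0,1)$. For every $\la_1,\ldots,\la_n\ge0$ and every $\al\in[\frac12,1)$ with ${\rm ex}^\al(\mu)\ge\VaR^\ga(\mu)$, that theorem gives the asserted equality of the constrained supremum with ${\rm ex}^\al(\mu)\sum_i\la_i$.

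The constraint $\P_X\in\cC$ coincides with the constraint in the corollary once $X_i\sim\mu$ is imposed. The only bookkeeping point is that $d$ is defined only on $\cpl^n(\cM)$, which holds automatically here. The key observation, and the one mild point needing care, is that $\ga$ is chosen from $\ep$ alone, before $\al$ and $\la$. Theorem \ref{th: boundeddistance} does not depend on $\al$ or $\la$, so the single $\ga$ works uniformly for all admissible $\al,\la_1,\ldots,\la_n$, as the statement requires. No further obstacle is expected.
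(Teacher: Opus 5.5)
Your proposal is correct and follows the paper's own argument: the paper obtains this corollary by combining Theorem \ref{th: boundeddistance} and Theorem \ref{thm: expectiles}. Concretely, you choose $\gamma$ so that the $d$-ball of radius $\varepsilon$ around $c(F_\mu,\ldots,F_\mu)$ contains every $\mu$-marginal coupling with $(c,\gamma)$-hidden dependence, and then apply the expectile theorem with $\mathcal{C}$ equal to that ball. You also correctly make explicit that $\gamma$ depends only on $\varepsilon$, and not on $\alpha$ or the weights $\lambda_i$.
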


\section{Computational Aspects}

The results obtained in the previous sections are mainly of a qualitative nature.\ In this section, we turn our focus to quantifying the interplay between $\ga\in (0,1)$, the sensitivity level $\ep>0$, and the maximal tail risk in specific setups from a practical risk management perspective.

\subsection{Dependence Measures}\label{sec.dependendecomputation}
In this subsection, we carry out an analysis for the relationship between the sensitivity level $\ep>0$ and $\ga\in (0,1)$ in the setup from Section \ref{sec.dependence.meas} where we studied risk bounds in the case of dependence uncertainty described by regular dependence measures.\ Throughout this subsection, let $Y=(Y_1,\ldots, Y_n)$ be a random vector with copula $c\colon [0,1]^n\to [0,1]$, $\ga\in (0,1)$ and $X=(X_1,\ldots, X_n)$ be a random vector with $X_i\overset{\d}{=}Y_i$ for $i=1,\ldots, n$, $(c,\ga)$-hidden dependence, and common $\ga$-tail event $A\in \cF$.\ For $i,j=1,\ldots, n$, we denote by $c_{ij}$ and $c_{ij}^{\rm tail}$ the two-dimensional distribution of $(U_i,U_j)$ and $(V_i,V_j)$ for $U=(U_1,\ldots, U_n)\sim c$ and $U=(V_1,\ldots, V_n)\sim c^{\rm tail}$, conditionally on $A^\cmp$ and $A$, respectively, where $c^{\rm tail}$ is an arbitrary but fixed copula, see Lemma \ref{lem:char-hidden} (iv).

We start with the case in which the dependence measure is given by the Pearson correlation.

\begin{example}[Pearson Correlation]\label{ex.pearson}
     Let $i,j=1,\ldots, n$.\ Then,
     \[
     \big|\cor (Y_i,Y_j)-\cor (X_i,X_j)\big|=\frac{ \big|\E [Y_iY_j]-\E[X_iX_j]\big|}{\sqrt{\var(Y_i)\var(Y_j)}}.
     \]     
     By Lemma \ref{lem:char-hidden} (iv), 
    \begin{align}
\notag  \big|\E [X_iX_j]-\E[Y_iY_j]\big|&= \Big|\ga \E \big[F_{Y_i}^{-1}(\ga U_i)F_{Y_j}^{-1}(\ga U_j)\,\big|\, A^\cmp\big]-\E \big[F_{Y_i}^{-1}(U_i)F_{Y_j}^{-1}(U_j)\,\big|\, A^\cmp\big]\\
 &\quad +(1-\ga)\E \big[F_{Y_i}^{-1}\big(\ga +(1-\ga)V_i\big)F_{Y_j}^{-1}\big(\ga+(1-\ga) V_j\big)\,\big|\, A\big]\Big|. \label{eq.pearson}
    \end{align}
 In the sequel, we use equation \eqref{eq.pearson} to determine $\ga\in (0,1)$ such that 
 \[
 \big|\cor(X_i,X_j)-\cor(Y_i,Y_j)\big|\leq \ep
 \]
 for $\ep>0$, independently of the copula $c^{\rm tail}$, for Bernoulli and exponentially distributed marginals.
  \begin{enumerate}
    \item[(i)] Let $Y_i\deq Y_j\sim B(1,p)$ be Bernoulli distributed with $p\in (0,1)$ and $\ga>1-p$.\ Then, $F_{Y_i}^{-1}=F_{Y_j}^{-1}=\eins_{[1-p,1)}$ and equation \eqref{eq.pearson} simplifies to
        \begin{align*}
  \big|\E [X_iX_j]-\E[Y_iY_j]\big|&= \Big|\ga \P\big(\{\ga U_i> 1-p\}\cap\{\ga U_j> 1-p\}\big)\\
  &\quad\quad-\P\big(\{U_i> 1-p\}\cap\{U_j> 1-p\}\big)+(1-\ga)\Big|\\
  &=  \Big|\ga {c_{ij}}\Big(\tfrac{1-p}{\ga},\tfrac{1-p}{\ga}\Big)-{c_{ij}}(1-p,1-p)\Big|,
    \end{align*}
    where, in the second step, we used the identity $$\P\big(\{U_i> u\}\cap\{U_j> u\}\big)=1-2u+c_{ij}(u,u)\quad\text{for all }u\in [0,1],$$ cf.\ \cite[Section 2.5, p.\ 75]{zbMATH05080942}.\
    In the case, where $c_{ij}(u,v)=uv$ for all $u,v\in [0,1]$, we thus obtain
    \[
     \big|\E [X_iX_j]-\E[Y_iY_j]\big|=\frac1\ga(1-p)^2-(1-p)^2.
    \]
    Hence, for $\ep>0$,
    \[
    \big|\cor(X_i,X_j)-\cor(Y_i,Y_j)\big|\leq \ep\quad \text{if and only if}\quad \ga\geq \frac{1}{1+\ep \frac{p}{1-p}}.
    \]
   \item[(ii)] Now, assume that $Y_i\deq Y_j\sim{\rm Exp}(\la)$ with $\la>0$, so that $$F_{Y_i}^{-1}(u)=F_{Y_j}^{-1}(u)=-\frac{\ln(1-u)}\la \quad \text{for }u\in (0,1).$$
   Moreover, let $Z_i:= F_{Y_i}^{-1}(V_i)$, $Z_j:= F_{Y_j}^{-1}(V_j)$, and $c_{ij}(u,v)=uv$ for all $u,v\in [0,1]$.\ Then, equation \eqref{eq.pearson} yields
 \begin{align*}
   \big|\cor(X_i,X_j)-\cor(Y_i,Y_j)\big| &=\Bigg|\frac1\ga \bigg(\int_0^\ga \ln(1-u)\,\d u\bigg)^2+(1-\ga)\big(\ln(1-\ga)\big)^2\\
   &\quad\quad -2(1-\ga)\ln(1-\ga) -\ga +(1-\ga)\cor(Z_i,Z_j)\Bigg|\\
   &=\Bigg|\frac{\big((1-\ga)\ln(1-\ga)+\ga\big)^2}{\ga}+(1-\ga)\big(\ln(1-\ga)\big)^2\\
   &\quad\quad-2(1-\ga)\ln(1-\ga) -\ga +(1-\ga)\cor(Z_i,Z_j)\Bigg|\\
   &= (1-\ga)\Bigg|\cor(Z_i,Z_j)+\frac{\big(\ln(1-\ga)\big)^2}{\ga}\Bigg|\\
   &\leq (1-\ga)\Bigg(1+\frac{\big(\ln(1-\ga)\big)^2}{\ga}\Bigg),
    \end{align*}
    which leads to 
    \[
    \big|\cor(X_i,X_j)-\cor(Y_i,Y_j)\big|\leq \ep \quad \text{if}\quad (1-\ga)\Bigg(1+\frac{\big(\ln(1-\ga)\big)^2}{\ga}\Bigg)\leq \ep
    \]
    for all $\ep>0$, independently of the copula $c^{\rm tail}$, see Figure \ref{fig:exponential_cor}.
    \end{enumerate}
       \begin{figure}[h]
\centering
\begin{tikzpicture}
\begin{axis}[
    axis lines = middle,
    xlabel = $\ga$,
    ylabel = $\ep$,
    domain=0:1,
    samples=200,
    xmin=0, xmax=1.1,
    ymin=0, ymax=1.1,
]
\addplot[black, thick]
{(1-x)*(1+(ln(1-x))^2/x)};
\end{axis}
\end{tikzpicture}
\caption{Correlation bounds as a function of $\ga\in (0,1)$ assuming $X_i\sim{\rm Exp}(\la)$ for $i=1,\ldots, n$ with $\la>0$, cf.\ Example \ref{ex.pearson} (ii).}
\label{fig:exponential_cor}
\end{figure}
 \end{example}

Next, we consider Spearman's rho as a dependence measure.\ Since Spearman's rho is independent of the marginals, we are able to come up with explicit bounds as the following example shows.
 \begin{example}[Spearman's Rho]\label{ex.spearman} 
 Let $i,j=1,\ldots, n$.\ Using \eqref{eq.pearson} with $F_{Y_i}^{-1}(u)=F_{Y_j}^{-1}(u)=u$ for $u\in (0,1)$, we obtain
   \begin{align*}
    \big|\spear(X_i,X_j)- \spear(Y_i,Y_j)\big|&= 12\Big|\ga^3\E\big[U_iU_j\,\big|\,A^\cmp\big]+\ga(1-\ga)+(1-\ga)^3\E\big[V_iV_j\,\big|\,A\big]\\
    &\qquad\qquad\qquad\qquad-\E\big[U_iU_j\,\big|\,A^\cmp\big]\Big|\\ 
    &=\Big|\big(1-\ga^3\big) \big(1-\spear(c_{ij})\big)-(1-\ga)^3 \Big(1-\spear\big(c_{ij}^{\rm tail}\big)\Big)\Big|\\
    &=\Big| 3\ga(1-\ga) \big(1-\spear(c_{ij})\big)+(1-\ga)^3\Big(\spear\big(c_{ij}^{\rm tail}\big)-\spear(c_{ij})\Big)\Big|\\
    &\leq (1-\ga)\Big( \big|3\ga -\big(3\ga+(1-\ga)^2\big)\spear(c_{ij})\big|+(1-\ga)^2\Big),
    \end{align*}
    where, in the last step, we used the fact that Spearman's rho takes values in $[-1,1]$.\ We point out that this is a sharp bound, which is attained for $\spear\big(c_{ij}^{\rm tail}\big)=\pm 1$.\ Hence, if $\spear(c_{ij})\leq \frac{3\ga}{3\ga+(1-\ga)^2}$ then, for all $\ep>0$ and any choice of  $c^{\rm tail}$,
    \[
    \big|\spear(X_i,X_j)- \spear(Y_i,Y_j)\big|\leq \ep \quad\text{if}\quad \ga \geq \sqrt[3]{1-\frac{\ep}{1-\spear(c_{ij})}}.
    \]
\end{example}

As for Spearman's rho, also in the case of Kendall's tau, we obtain explicit bounds for the distance between $\P_X$ and $\P_Y$.

\begin{example}[Kendall's Tau]\label{ex.kendall} Let $i,j=1,\ldots, n$.\ Using Lemma \ref{lem:char-hidden} (iii) and (iv), we find that
\begin{align*}
    \big|\tau(X_i,X_j)-\tau(Y_i,Y_j)\big|&\leq 4\Big| \ga^2\E\big[c_{ij}(U_i,U_j)\, \big|\, A^\cmp\big] +\ga(1-\ga)+(1-\ga)^2\E\big[c_{ij}^{\rm tail}(V_i,V_j)\, \big|\, A\big]\\
    &\qquad\qquad\qquad - \E\big[c_{ij}(U_i, U_j)\, \big|\, A^\cmp\big]\Big|\\
    & = \Big|\big(1-\ga^2\big) \big(1-\tau(c_{ij})\big)-(1-\ga)^2 \Big(1-\tau\big(c_{ij}^{\rm tail}\big)\Big)\Big|\\
       & = \Big|2\ga(1-\ga) \big(1-\tau(c_{ij})\big)+(1-\ga)^2 \Big(\tau\big(c_{ij}^{\rm tail}\big)-\tau(c_{ij})\Big)\Big|\\
    &\leq (1-\ga)\big( \big|2\ga -(1+\ga)\tau(c_{ij})\big|+(1-\ga)\big),
\end{align*}
where, in the last step, we used the fact that Kendall's tau takes values in $[-1,1]$.\ We point out that this is again a sharp bound, which is attained for $\tau\big(c_{ij}^{\rm tail}\big)=\pm 1$.\
Hence, if $\tau(c_{ij})\leq \frac{2\ga}{1+\ga}$ then, for all $\ep>0$ and any choice of  $c^{\rm tail}$,
    \[
    \big|\tau(X_i,X_j)- \tau(Y_i,Y_j)\big|\leq \ep \quad\text{if}\quad \ga \geq \sqrt{1-\frac{\ep}{1-\tau(c_{ij})}}.
    \]
\end{example}

 \subsection{Wasserstein Distance for Hidden Dependence}\label{sec.wasserstein}
Theorem \ref{th: constrainedwc} states that, for every copula $c$ and $\ep>0$, there exists $\ga\in(0,1)$ such that the worst case scenario of $\ga$-tail risk measures is not affected by a dependence constraint of the form $d\big(\P_X,c(F_{\mu_1},\ldots, F_{\mu_n})\big)\leq \ep$, in which $d$ is a consistent probability distance.\ A practical implementation of such a result requires studying the relationship between $\ep$ and $\ga$.\ In this subsection, we study this relationship in the case where the consistent probability distance $d$ is given by the $p$-Wasserstein distance with $p\in [1,\infty)$, cf.\ Example \ref{ex.probabdistance} a).\ The analysis shows that for every $\ga \in (0,1)$ and every copula $c$, the $p$-Wasserstein distance attainable considering $(c,\ga)$-hidden dependence structures admits analytically tractable upper bounds.\ For the Fortet-Mourier distance and, more generally, a broad class of IPMs, this relation has already been illustrated in Remark \ref{rem: boundeddist}.\ We point out that, for $p=1$, the Wasserstein distance is an IPM with $d^*(x,y)=|x-y|$.\ In this case, we shall obtain a refined estimate compared to Remark \ref{rem: boundeddist}.\

Throughout this section, rather than using the usual Euclidean distance on $\R^n$, we choose the $\ell_p$-norm divided by $n^{1/p}$ in the definition of the Wasserstein distance, i.e.,
\[
|x|:=
\bigg(\frac{1}n\sum_{i=1}^n |x_i|^p\bigg)^{1/p}\quad \text{for }x\in \R^n.
\]
Let $Y=(Y_1,\ldots, Y_n)$ be a random vector with copula $c\colon [0,1]^n\to [0,1]$, $\ga\in (0,1)$ and $X=(X_1,\ldots, X_n)$ be a random vector with $X_i\overset{\d}{=}Y_i$ for $i=1,\ldots, n$, $(c,\ga)$-hidden dependence, and common $\ga$-tail event $A\in \cF$.\ Furthermore, assume that $\E[|Y_i|^p]<\infty$ for $i=1,\ldots, n$.\ Then, using Lemma \ref{lem:char-hidden} (iv),
\begin{align*}
 \cW_p(\P_X,\P_Y)^p&\leq\frac{1}n \sum_{i=1}^n \ga\int_0^1 \big(F_{Y_i}^{-1} (u)-F_{Y_i}^{-1}(\ga u)\big)^p\,\d u\\
 &\quad +(1-\ga) \int_0^1\int_0^1 \big|F_{Y_i}^{-1}\big(\ga+(1-\ga)v\big)-F_{Y_i}^{-1} (u)\big|^p\, \d u\, \d v.
\end{align*}
In the cases $p=1$ and $p=2$, the previous estimate can be made more explicit.\ We start with the case $p=1$.\ Then, using the previous estimate,
\begin{align*}
\cW_1(\P_X,\P_Y)&\leq  \frac1n\sum_{i=1}^n\ga \E[Y_i]-\int_0^\ga F_{Y_i}^{-1}(u)\, \d u + \int_\ga^1\int_0^1 \big|F_{Y_i}^{-1}(v)-F_{Y_i}^{-1}(u)\big|\, \d u\, \d v\\
&= \frac1n\sum_{i=1}^n\ga \E[Y_i]-\int_0^\ga F_{Y_i}^{-1}(u)\, \d u + \int_\ga^1\int_0^v F_{Y_i}^{-1}(v)-F_{Y_i}^{-1}(u)\, \d u\, \d v\\
&\quad +\int_\ga^1\int_v^1 F_{Y_i}^{-1}(u)-F_{Y_i}^{-1}(v)\, \d u\, \d v.
\end{align*}
We first focus on the second-to-last integral on the right-hand side.\ Using Fubini's theorem, we obtain
\begin{align*}
\int_\ga^1\int_0^v F_{Y_i}^{-1}(v)-F_{Y_i}^{-1}(u)\, \d u\, \d v&= \int_{\ga}^1 vF_{Y_i}^{-1}(v)\, \d v-(1-\ga)\int_0^\ga F_{Y_i}^{-1}(u)\, \d u\\
&\quad - \int_\ga^1 (1-u) F_{Y_i}^{-1}(u)\, \d u\\
&=\int_\ga^1 (2u-1) F_{Y_i}^{-1}(u)\, \d u-(1-\ga)\int_0^\ga F_{Y_i}^{-1}(u)\, \d u.
\end{align*}
Using again Fubini's theorem, for the last integral on the right-hand side, we find
\begin{align*}
\int_\ga^1\int_v^1 F_{Y_i}^{-1}(u)-F_{Y_i}^{-1}(v)\, \d u\, \d v&= \int_\ga^1 (u-\ga)F_{Y_i}^{-1}(u)\, \d u-\int_{\ga}^1 (1-v)F_{Y_i}^{-1}(v)\, \d v\\
& = \int_\ga^1 (2u-\ga -1) F_{Y_i}^{-1}(u)\, \d u.
\end{align*}
Altogether, using the substitution $s=u^2$, we thus get
\begin{align*}
\cW_1(\P_X,\P_Y)&\leq  \frac2n\sum_{i=1}^n\int_\ga^1 2uF_{Y_i}^{-1}(u)\, \d u-\ga(1-\ga) \TVaR^\ga (Y_i) -(1-\ga) \E[Y_i]\\
&=\frac2n\sum_{i=1}^n\int_{\ga^2}^1 F_{Y_i}^{-1}\big(\sqrt s\big)\, \d s-\ga(1-\ga) \TVaR\ga (Y_i) -(1-\ga) \E[Y_i]\\
&=\frac2n\sum_{i=1}^n(1-\ga^2)\TVaR^{\ga^2}(M_i)-\ga(1-\ga) \TVaR\ga (Y_i) -(1-\ga) \E[Y_i],
\end{align*}
where $$M_i:= F_{Y_i}^{-1}\big(\sqrt U\big)\quad \text{for }i=1,\ldots, n\text{ with }U\text{ uniform on }(0,1).$$\ Observe that, for $i=1,\ldots, n$, $M_i\deq Y_i\vee\widetilde Y_i$, where $\widetilde Y_i$ is an  independent copy of $Y_i$, since
\[
\P\big(Y_i\vee\widetilde Y_i\leq a\big)=\P\big(\{Y_i\leq a\}\cap \big\{\widetilde Y_i\leq a\big\}\big)=\P(Y_i\leq a)\cdot \P\big(\widetilde Y_i\leq a\big)=\big[F_{Y_i}(a)\big]^2
\]
for all $a\in \R$, which implies $F_{Y_i\vee\widetilde Y_i}^{-1}(u)=F_{Y_i}^{-1}\big(\sqrt{u}\big)$ for all $u\in (0,1)$.

For the case $p=2$, we get
\[
\cW_2(\P_X,\P_Y)^2\leq \frac2n\sum_{i=1}^n \E[Y_i^2]-\ga \int_0^1 F_{Y_i}^{-1}(u)F_{Y_i}^{-1}(\ga u)\, \d u-(1-\ga )\E[Y_i]\TVaR\ga(Y_i).
\]
Assume that $Y_i\geq 0$ $\P$-a.s.\ for all $i=1,\ldots, n$.\ Then, using the estimate $F_{Y_i}^{-1}(u)\geq F_{Y_i}^{-1}(\ga u)$ for $u\in (0,1)$, we get
\begin{align*}
 \cW_2(\P_X,\P_Y)^2&\leq \frac2n\sum_{i=1}^n \E[Y_i^2]-\ga \int_0^1 \big[F_{Y_i}^{-1}(\ga u)\big]^2\, \d u-(1-\ga )\E[Y_i]\TVaR^\ga(Y_i)\\
 &= \frac2n\sum_{i=1}^n \E[Y_i^2]-\int_0^\ga \big[F_{Y_i}^{-1}(u)\big]^2\, \d u-(1-\ga )\E[Y_i]\TVaR^\ga(Y_i)\\ 
 &=\frac{2(1-\ga)}n\sum_{i=1}^n \TVaR^\ga(Y_i^2)-\E[Y_i]\TVaR^\ga(Y_i).
\end{align*}

We conclude this subsection with a series of examples in which we are able to derive closed formulas for the upper bounds both for the $1$-Wasserstein and the $2$-Wasserstein distance.

\begin{example}\label{ex.wass1}\
In this example, we provide explicit bounds for the $1$-Wasserstein distance $\cW_1$ as a function of $\ga$ for marginals that are uniform, exponential or Bernoulli.
\begin{enumerate}
\item[a)] Let $a,b\in \R$ with $a<b$ and consider the case, where $Y_i$ is uniform on the interval $(a,b)$ for $i=1,\ldots, n$.\ Then, $\E[Y_i]=\frac{a+b}{2}$ and $F_{Y_i}^{-1}(u)=a+(b-a)u$ for $u\in (0,1)$ and $i=1,\ldots, n$.\ Hence,
\begin{align*}
\cW_1(\P_X,\P_Y)&\leq\Big(2a(1-\ga^2)+(b-a)(1-\ga^3)-2\ga(1-\ga)a\\
&\quad -(b-a)\ga (1-\ga^2)-(1-\ga)(a+b)\Big)+\frac{b-a}{3}(1-\ga^3)\\
&=\frac{b-a}{3}(1-\ga^3).
\end{align*}
Hence,
\[
\cW_1(\P_X,\P_Y)\leq \ep\quad \text{if}\quad \ga\geq \sqrt[3]{1-\frac{3\ep}{b-a}}.
\]
\item[b)] Let $\la>0$ and $Y_i\sim{\rm Exp}(\la)$ for $i=1,\ldots, n$.\ Then, $\E[Y_i]=\frac1\la$ and $F_{Y_i}^{-1}(u)=-\frac1\la \log (1-u)$ for $u\in (0,1)$ and $i=1,\ldots, n$.\ Hence,
\begin{align*}
\cW_1(\P_X,\P_Y)&\leq\frac1\la \Big(4(1-\ga)-(1-\ga)^2-\big(4(1-\ga)-2(1-\ga)^2\big)\log (1-\ga)\\
&\quad -2\ga(1-\ga)+2\ga(1-\ga)\log (1-\ga)-2(1-\ga)\Big)\\
&=\frac{(1-\ga)^2-2(1-\ga)\log(1-\ga)}\la.
\end{align*}
Hence,
\[
\cW_1(\P_X,\P_Y)\leq \ep\quad \text{if}\quad (1-\ga)^2-2(1-\ga)\log(1-\ga)\leq \la \ep,
\]
see Figure \ref{fig:exponential_wass1}.

\begin{figure}[h]
\centering
\begin{tikzpicture}
\begin{axis}[
    axis lines = middle,
    xlabel = $\ga$,
    ylabel = $\frac\ep\la$,
    domain=0:1,
    samples=200,
    xmin=0, xmax=1.1,
    ymin=0, ymax=1.15,
]
\addplot[black, thick]
{(1-x)^2 - 2*(1-x)*ln(1-x)};
\end{axis}
\end{tikzpicture}
\caption{$\cW_1$-bounds as a function of $\ga\in (0,1)$ assuming $X_i\sim{\rm Exp}(\la)$ for $i=1,\ldots, n$ with $\la>0$, cf.\ Example \ref{ex.wass1} b).}
\label{fig:exponential_wass1}
\end{figure}

\item[c)] Let $p\in(0,1)$ and $Y_i\sim{\rm B}(1,p)$ for $i=1,\ldots, n$.\ Then, $\E[Y_i]=p$ and $F_{Y_i}^{-1}(u)=\eins_{(1-p,1)}(u)$ for $u\in (0,1)$ and $i=1,\ldots, n$.\ Hence,
\begin{align*}
\cW_1(\P_X,\P_Y)&\leq2\Big(1-\max\{\ga,1-p\}^2-\ga\big(1-\max\{\ga,1-p\}\big)-(1-\ga)p\Big)\\
&=2(1-p)\min\{1-\ga,p\}.
\end{align*}
In the case where $1-\ga\leq p$, we thus find that
\[
\cW_1(\P_X,\P_Y)\leq \ep\quad \text{if}\quad \ga\geq 1-\frac{\ep}{2(1-p)}.
\]
\end{enumerate}
\end{example}
\begin{example}\label{ex.wass2}\
We now derive explicit bounds for the $2$-Wasserstein distance $\cW_2$ as a function of $\ga\in (0,1)$.\ Again, we consider marginal distributions that are uniform, exponential, or Bernoulli.
\begin{enumerate}
\item[a)] Let $a,b\in \R$ with $0\leq a<b$ and consider the case, where $Y_i$ is uniform on the interval $(a,b)$ for $i=1,\ldots, n$.\ Then, for $i=1,\ldots, n$, $\E[Y_i]=\frac{a+b}{2}$, $\TVaR^\ga(Y_i)= \frac{a+b+\ga(b-a)}{2}$, and
\[
\TVaR^\ga(Y_i^2)= \frac{b^2+\big(a+\ga(b-a)\big)\big(a+b+\ga(b-a)\big)}{3}.
\]
Hence,
\[
\cW_2(\P_X,\P_Y)^2\leq 
\frac{b-a}{6}
\Big(\big(1+3\gamma^2-4\gamma^3\big)b
+\big(5\gamma-1-4\gamma^2\big)a\Big).
\]
If $a=0$, this simplifies to 
\[
 \cW_2(\P_X,\P_Y)^2\leq \frac{b^2}6\big(1+3\ga^2-4\ga^3\big),
\]
so that, in this case, 
\[
\cW_2(\P_X,\P_Y)\leq \ep\quad \text{if}\quad 1+3\ga^2-4\ga^3\leq \frac{6\ep^2}{b^2},
\]
see Figure \ref{fig:uniform}.

\begin{figure}[h]
\centering
\begin{tikzpicture}
\begin{axis}[
    axis lines = middle,
    xlabel = $\ga$,
    ylabel = $b^2\ep^2$,
    style={
        /pgf/number format/.cd,
        fixed,
        precision=2,
    },
    domain=0:1,
    samples=200,
    xmin=0, xmax=1.1,
    ymin=0, ymax=0.2,
]
\addplot[black, thick]
{((1+3*x^3-4*x^4)/6)};
\end{axis}
\end{tikzpicture}
\caption{$\cW_2^2$-bounds as a function of $\ga\in (0,1)$ assuming $X_i\sim \mathcal{U}(0,b)$ for $i=1,\ldots, n$ with $b>0$, cf.\ Example \ref{ex.wass2} a).}
\label{fig:uniform}
\end{figure}

\item[b)] Let $\la>0$ and $Y_i\sim{\rm Exp}(\la)$ for $i=1,\ldots, n$.\ Then, for $i=1,\ldots, n$, $\E[Y_i]=\frac1\la$, $\TVaR^\ga(Y_i)=\frac{1-\log (1-\ga)}{\la}$, and
\[
\TVaR^\ga(Y_i^2)=\frac{2+[\log(1-\ga)]^2-2\log(1-\ga)}{\la^2}.
\]
Hence,
\[
\cW_2(\P_X,\P_Y)^2\leq 2(1-\ga)\frac{1+[\log(1-\ga)]^2-\log(1-\ga)}{\la^2},
\]
so that
\[
\cW_2(\P_X,\P_Y)\leq \ep\quad \text{if}\quad (1-\ga)\big(1+[\log(1-\ga)]^2-\log(1-\ga)\big)\leq \frac{\la^2 \ep^2}2,
\]
see Figure \ref{fig:exponential_wass2}.

\begin{figure}[h]
\centering
\begin{tikzpicture}
\begin{axis}[
    axis lines = middle,
    xlabel = $\ga$,
    ylabel = $\frac{\ep^2}{\la^2}$,
    domain=0:1,
    samples=200,
    xmin=0, xmax=1.1,
    ymin=0, ymax=2.35,
]
\addplot[black, thick]
{(2*(1-x)*(1+(ln(1-x))^2 - ln(1-x)))};
\end{axis}
\end{tikzpicture}
\caption{$\cW_2^2$-bounds as a function of $\ga\in (0,1)$ assuming $X_i\sim{\rm Exp}(\la)$ for $i=1,\ldots, n$ with $\la>0$, cf.\ Example \ref{ex.wass2} b).}
\label{fig:exponential_wass2}
\end{figure}

\item[c)] Let $p\in(0,1)$ and $Y_i\sim{\rm B}(1,p)$ for $i=1,\ldots, n$.\ Then, $Y_i^2=Y_i$ for $i=1,\ldots, n$, so that
\[
 \cW_2(\P_X,\P_Y)^2\leq 2(1-\ga)\big(1-\E[Y_i]\big)\TVaR^\ga(Y_i)=\begin{cases}
 2p(1-p),&\ga< 1-p,\\
 2(1-\ga)(1-p),& \ga\geq 1-p.\end{cases}
\]
In the case where $1-p\leq \ga$, we thus obtain
\[
\cW_2(\P_X,\P_Y)\leq \ep\quad \text{if}\quad \ga\geq  1-\frac{\ep^2}{2(1-p)}.
\]
\end{enumerate}
\end{example}

\subsection{Wasserstein Distance for Exchangeable Bernoulli Random Variables}
 In this subsection, we aim to identify appropriate values of the radius $\ep>0$ for the Wasserstein constraints derived in the previous subsection by analyzing the distance between exchangeable random vectors in the setup from Section \ref{sec:credit-risk} below.\ To that end, let $p\in [1,\infty)$ and $\R^n$ be endowed with the norm $|x|:=\big(\frac1n \sum_{i=1}^n|x_i|^p\big)^{\frac1p}$ for $x\in \R^n$.\
In this section, we consider two $\{0,1\}^n$-valued random vectors $X=(X_1,\ldots,X_n)$ and $Y=(Y_1,\ldots,Y_n)$.\ We assume that the joint laws $\P_X$ and $\P_Y$ are exchangeable, and define 
$$
S_X:=\sum_{i=1}^n X_i,
\quad\text{and}\quad
S_Y:=\sum_{i=1}^n Y_i.
$$
Recall that the laws $\P_X$ and $\P_Y$ are exchangeable if and only if, for all $k,l\in \{0,\ldots, n\}$ and $x,y\in \{0,1\}^n$ with $k=\sum_{i=1}^nx_i$ and $l=\sum_{i=1}^ny_i$,
\begin{equation}\label{eq.dist.exchangeable}
\P(X=x)=\frac{\P(S_X=k)}{\binom{n}{k}}\quad \text{and}\quad \P(Y=y)=\frac{\P(S_Y=l)}{\binom{n}{l}}.
\end{equation}
Since the random vectors $X$ and $Y$ take values in $\{0,1\}^n$,
\[
|X-Y|^p=\frac1n\sum_{i=1}^n |X_i-Y_i|\geq \frac1n |S_X-S_Y|,
\]
so that
\[
\cW_p(\P_X,\P_Y)\geq \bigg(\frac1n\sum_{k=0}^{n-1} \big|F_{S_X}(k)-F_{S_Y}(k)\big|\bigg)^{\frac1p}.
\]
On the other hand, given a co-monotonic realization of $S_X$ and $S_Y$, for $k,l\in \{0,\ldots, n\}$ and $x,y\in \{0,1\}^n$ with $k=\sum_{i=1}^nx_i$ and $l=\sum_{i=1}^ny_i$, define
\[
\P\big(\{X=x\}\cap \{Y=y\}\big):=\frac{\P\big(\{S_X=k\}\cap \{S_Y=l\}\big)}{\binom{n}{k}\binom{k}{l}}
\]
if $k\geq l$ and $x-y\geq 0$, and
\[
\P\big(\{X=x\}\cap \{Y=y\}\big):=\frac{\P\big(\{S_X=k\}\cap \{S_Y=l\}\big)}{\binom{n}{l}\binom{l}{k}}
\]
if $l\geq k$ and $y-x\geq 0$.\ Since  $\binom{n}{k}\binom{k}{l}=\binom{n-l}{k-l}\binom{n}{l}$ if $k\geq l$ and $\binom{n}{l}\binom{l}{k}=\binom{n-k}{l-k}\binom{n}{k}$ if $l\geq k$,
it follows that
\[
\P(X=x)=\frac{\P(S_X=k)}{\binom{n}{k}}\quad \text{and}\quad \P(Y=y)=\frac{\P(S_Y=l)}{\binom{n}{l}},
\]
so that, by \eqref{eq.dist.exchangeable}, $X\sim \P_X$ and $Y\sim \P_Y$.\ By construction,
\[
|X-Y|^p=\frac1n|S_X-S_Y|,
\]
and we therefore obtain
\begin{equation}\label{eq.wass.exchangeable}
\cW_p(\P_X,\P_Y)= \bigg(\frac1n\sum_{k=0}^{n-1} \big|F_{S_X}(k)-F_{S_Y}(k)\big|\bigg)^{\frac1p}.
\end{equation}

\begin{remark}\label{rem.wass2}
For the case of the $2$-Wasserstein distance, if we consider the law $\P_X$ as a plausible alternative for $\P_Y$, the right-hand side in \eqref{eq.wass.exchangeable} yields a reasonable choice for the radius $\ep>0$ in Theorem \ref{th: constrainedwc}. The basic reasoning is that since $\P_X$ is considered plausible, all other models with smaller or equal distance to the reference model are plausible as well; see also \cite{BernardPesentiVanduffel2024}. \ By Example \ref{ex.wass2} c), this leads to the explicit condition
\begin{equation}\label{eq.wass.cond.exchangeable}
1-\ga\leq \frac{\sum_{k=0}^{n-1} \big|F_{S_X}(k)-F_{S_Y}(k)\big|}{2n\big(1-\P(Y_1=1)\big)}
\end{equation}
for $\ga\in (0,1)$.
\end{remark}

\section{Model Risk in Credit Risk Portfolios}\label{sec:credit-risk}

\subsection{Set-up}\label{sec:setup-below}

We consider a portfolio of $n\in \N$ borrowers, and denote by $X_i$ the indicator function of the default event for borrower $i=1,\ldots, n$ over a one-year time horizon.\ The total loss caused by a default coincides
with the remaining exposure (net of recoveries) denoted by $\text{Exposure}_i$. A precise computation of a tail risk measure of the portfolio loss $\sum_{i=1}^n \text{Exposure}_i \cdot X_i$
can only be obtained if one knows the joint distribution of the default vector
\[
(X_1,X_2,\ldots,X_n).
\]

While financial institutions typically use models that allow for the specification of default probabilities $p_i$ and thus of the distribution $F_i$ of the variables $X_i$, their joint distribution remains difficult to specify. The lack of sufficiently rich default statistics usually makes it impossible to estimate joint default probabilities with reasonable accuracy. In other words, all models that assess the risk of credit portfolios necessarily require additional ad hoc and hard-to-justify assumptions to describe the full dependence structure. In this regard, many financial institutions, as well as the Basel~III and Solvency~II regulatory frameworks, rely on ``Merton's model of the firm'' when assessing the default risk of a portfolio of loans. The basic idea is very simple. A default is an event in which the net asset value becomes too low or, equivalently, the net liability becomes too high, i.e., default of the $i$-th risk occurs when
\[
L_i > \tau_i,
\]
where $L_i$ is a normalized random net liability and $\tau_i$ is the threshold value such that
\[
p_i = \mathbb{P}(L_i > \tau_i)
\]
and thus $X_i=\eins_{\{ L_i> \tau_i\}}$.
We refer to, e.g., \cite{FreyMcNeil2003} for more details. Within this framework, a key issue in risk assessment is the estimation of the joint distribution of the borrowers' net liabilities. In this respect, the standard specification of Merton's model consists in assuming that, after normalization, the random variables $L_i$ are $\mathcal{N}(0,1)$ and exhibit Gaussian dependence fully specified by a correlation matrix. Under this assumption, one also has that
$(X_1,\dots,X_n)\sim c(F_{X_1},\ldots, F_{X_n})$, i.e., within structural models, the default indicators inherit the Gaussian dependence structure from the net liabilities.\  

Then, given a risk measure $R$, the portfolio risk assessment is obtained by computing 
the following quantity:
\[
R\left( \sum_{i=1}^n \text{Exposure}_i \cdot X_i\right),
\]
To study the impact of dependence uncertainty, we consider as admissible all probability measures \(\P_X\) with prescribed marginal distributions \(F_{X_1},\ldots,F_{X_n}\) such that
\[
\cW_2\big(\P_X, c(F_{X_1},\ldots,F_{X_n})\big)\leq \ep, 
\]
where we take the reference copula \(c\) to be Gaussian. 

The radius \(\ep\) is then calibrated as the Wasserstein distance between the Gaussian reference model and an alternative model in which \(\P_X\) is driven by a Student's \(t\)-copula with \(\nu\) degrees of freedom. In other words, if such a \(t\)-copula model for \(\P_X\) is deemed plausible, then all copula models whose induced distribution for $(X_1,\dots,X_n)$ lies at an equal or smaller Wasserstein distance from the Gaussian benchmark \(c(F_{X_1},\ldots,F_{X_n})\) are also regarded as admissible; see also \cite{BernardPesentiVanduffel2024}.

\subsection{Assessing VaR and TVaR}\label{sec.TVAR}

VaR and TVaR are the most prominent risk measures used in operational risk management and are also central to regulatory capital frameworks. For instance, the Swiss Solvency Test (SST) adopts TVaR at the \(99\%\) confidence level as its solvency risk measure \cite{FINMA2006SST}, while the Basel framework for credit risk capital relies on VaR at the \(99.9\%\) confidence level, cf.\ \cite{BCBS2019CRE31}. In line with this, hereafter we consider $\alpha=0.999$ for $\VaR^\alpha$ and $\alpha=0.99$ for $\TVaR^\alpha$.

We first focus on a portfolio of $n=500$ loans.\ Furthermore, a Gaussian copula with a homogeneous correlation matrix, in which all pairwise correlations are equal to $\rho=0.1$ is used to model the dependence among the liabilities $L_i$.\ The critical default threshold is $\tau_i=F^{-1}_{L_i}(0.99)$, which yields a marginal default probability $\P(X_i=1)=0.01$ for every loan $i=1,\ldots, n$ in the portfolio.\ For simplicity, all exposures are set to be equal to one. Theorem \ref{th: boundeddistance} guarantees that, for every $\ep>0$, there exists $\gamma\in(0,1)$ such that all couplings with $(c,\ga)$-hidden dependence do not deviate more than $\ep$ from the reference model.\ Thus, also the coupling with  $(c,\ga)$-hidden dependence and co-monotonic upper tails is an admissible model for a suitable choice of $\gamma$.\ In this regard, we point out that co-monotonic upper tails are not unrealistic in that there is empirical evidence suggesting the presence of extreme (tail) co-movements among risks, cf. \cite{ ContKokholm2013, ContWagalath2016,DasUppal2004}.

Table \ref{tab:percent_table} reports, for \(\text{VaR}^{0.999}\) and \(\text{TVaR}^{0.99}\), the ratio between the portfolio risk measure computed under a \((c,\gamma)\)-hidden dependence structure and co-monotonic upper tails satisfying the \(\cW_2\) constraint and the corresponding value computed under the reference Gaussian copula \(c\).\ The radius is computed using the exact formula \eqref{eq.wass.exchangeable}.\ This reflects the fact that all models whose distance from the reference model does not exceed the distance of the model for \((X_1,\ldots,X_n)\), obtained by replacing only the Gaussian copula with a \(t\)-copula with \(\nu\) degrees of freedom, are considered plausible.\ In this regard, we note that the larger the number of degrees of freedom, the more closely the \(t\)-copula resembles the Gaussian copula, and hence the smaller the resulting radius.\ If empirical data are available and a copula model must be selected from the data, it may be difficult to statistically distinguish a \(t\)-copula with \(\nu=20\) degrees of freedom from a Gaussian copula.\ For such a high number of degrees of freedom, the \(t\)-copula is very close to the Gaussian copula, especially when the correlation parameter \(\rho\) is small.\ To determine the value of \(\gamma\) associated with a given radius \(\ep\), we use the relation between \(\ep\) and \(\gamma\) derived in Remark \ref{rem.wass2}(c). 

\begin{table}[h]
\centering
\renewcommand{\arraystretch}{1.5}
\captionsetup{skip=15pt}
\begin{tabular}{|c|c|c|c|c|c|}
\hline
$\nu$, t-copula  & $\cW_2$ radius $\ep$ &  Compatible $\gamma$ & Underest.\ $\TVaR^{0.99}$ & Underest.\ $\VaR^{0.999}$ \\
\hline
3 & 0.1063 & 0.9944 & 883\% & 1219\% \\
5 & 0.0917 & 0.9958 & 740\% & 1219\% \\
10 & 0.0733 & 0.9973 & 534\% & 1219\% \\
20 & 0.0550 & 0.9985 & 362\% & 1219\% \\
\hline
\end{tabular}

\caption{Homogeneous portfolio of credit loans: size $n=500$, default probability $p=0.01$ and asset correlation $\rho=0.10$.\ For each 2-Wasserstein radius $\ep$, the table reports a compatible value of $\gamma$ such that the $(c,\ga)$-hidden dependence model with co-monotonic tail copula $c^{\rm tail}$ is feasible.\ Columns 4 and 5 give the corresponding underestimation ratios relative to the reference Gaussian copula $c$ for $\operatorname{TVaR}_{0.99}(\sum_{i=1}^n X_i)$ and $\operatorname{VaR}_{0.999}(\sum_{i=1}^n X_i)$, respectively.\ Both ratios are estimated via Monte Carlo simulation.}
\label{tab:percent_table}
\end{table}

The results in Table \ref{tab:percent_table} show that even relatively small values of $\ep>0$, corresponding to a small perturbation of the dependence structure, can translate into a significant underestimation of the TVaR$^{0.99}$  and VaR$^{0.999}$ of the portfolio loss.\ Notice that since $\gamma > 0.99$, the (c,$\gamma$)-hidden model with a co-monotonic tail copula is merely one of several plausible models, but not necessarily the worst-case model for TVaR$^{0.99}$.\ Nevertheless, the extent of possible underestimation is substantial, even when a $t_{20}$-copula is used.

The results for VaR are particularly striking.\ In all cases, the degree of underestimation is more than tenfold.\ In fact, as $\gamma < 0.999$ holds, this value corresponds to the unconstrained upper bound for $\VaR^{0.999}$, which is easily seen to be equal to 500 and this bound is also attained by the (c,$\gamma$)-model with co-monotonic upper tails, whereas the VaR under the reference model is only 41.

These findings show that even very small deviations from a reference model can substantially inflate risk measures when very high confidence levels are used. In other words, we provide evidence that capital requirements based on extreme tail-risk measures, as commonly used in practice, are highly sensitive to even minor model modifications. Accordingly, we recommend caution when using tail-risk measures that rely on high confidence levels.

To shed further light on this issue, we assess, for large homogeneous portfolios, the extent to which VaR may be underestimated for different values of the default probability \(p\) and the asset correlation $\rho$. To this end, we use the asymptotic formula of \cite{Vasicek2002LoanPortfolioValue} for the VaR of a large homogeneous credit portfolio under the Merton model, expressed as a percentage of total exposure:

\begin{equation}\label{eq.asymptotic.vasicek} {\lim_{n\to \infty}}\VaR^{0.999}\bigg({\frac1n}\sum_{i=1}^n X_i\bigg)
=
\Phi\bigg(
\frac{\Phi^{-1}(p)+\sqrt\rho\,  \Phi^{-1}(0.999)}
{\sqrt{1-\rho}}
\bigg).\end{equation}
We point out that the latter also enters the formula for the computation of risk-weighted assets (RWAs) with $\rho=R(p)$, described in Article 153 of the Capital Requirements Regulation (CRR) \cite{CRR}.\ If the compatible $\ga$ is less than $0.999$, \eqref{eq.asymptotic.vasicek} provides a theoretical upper bound for the $\VaR^{0.999}$-underestimation ratio for a large number of obligors:
\begin{equation}\label{eq.var.underestimation}
\lim_{n\to \infty} \frac{\VaR^{0.999}(\sum_{i=1}^n Z_i)}{\VaR^{0.999}(\sum_{i=1}^n X_i)}=\Bigg(\Phi\bigg(
\frac{\Phi^{-1}(p)+\sqrt \rho\,  \Phi^{-1}(0.999)}
{\sqrt{1-\rho}}
\bigg)\Bigg)^{-1},
\end{equation}
where, for each $n\in \N$, $(Z_1,\ldots, Z_n)$ is a random vector with $(c,\ga)$-hidden dependence and co-monotonic tail copula $c^{\rm tail}$, representing a joint distribution that yields the worst-case aggregate risk for $\VaR^{0.999}$.\ In this context, we also refer to \cite[Theorem 4.2]{nendel2026asymptotic} for an analogous formula for arbitrary $\ga$-tail risk measures in the case $\rho=0$.

 The following table reports the \(\cW_2\) radius $\ep$, the corresponding compatible hidden-dependence level $\ga$, and the implied VaR underestimation
ratio for an exchangeable Bernoulli portfolio with $n=50{,}000$ obligors and a
$t$-copula alternative with $\nu =20$ degrees of freedom. The last column reports the underestimation relative to the worst-case VaR {according to \eqref{eq.var.underestimation}}.\ Note that, as $\ga<0.999$, we are guaranteed that the worst-case model corresponds to the unconstrained upper bound.

\begin{table}[htbp]
\centering

\label{tab:epsilon-gamma-underest-var-n50000}
\begin{tabular}{cccccc}
\hline
$p$ & $\rho$ & $\nu$ & $\varepsilon$ & Compatible $\gamma$ & Underest.\ $\mathrm{VaR}^{0.999}$ \\
\hline
0.0075 & 0.05 & 20 & 0.059 & 0.9982 & 2703\% 
\\
0.0075 & 0.10 & 20 & 0.053 & 0.9986 & 1599\% \\
0.0075 & 0.15 & 20 & 0.049 & 0.9988 & 1110\% \\
0.0100 & 0.05 & 20 & 0.065 & 0.9979 & 2142\% \\
0.0100 & 0.10 & 20 & 0.058 & 0.9983 & 1290\% \\
0.0100 & 0.15 & 20 & 0.053 & 0.9986 & 907\% \\
0.0125 & 0.05 & 20 & 0.070 & 0.9975 & 1791\% \\
0.0125 & 0.10 & 20 & 0.062 & 0.9981 & 1095\% \\
0.0125 & 0.15 & 20 & 0.057 & 0.9984 & 778\% \\
\hline
\end{tabular}
\caption{Homogeneous portfolio of $n=50,000$ credit loans. For different values of the default probability $p$ and the asset correlation $\rho$, the last column of the table reports the corresponding underestimation ratio for small perturbations (driven by a $t_{20}$-copula) relative to the reference Gaussian copula $c$ for $\operatorname{VaR}_{0.999}(\sum_{i=1}^n X_i)$.\ This ratio is computed via the asymptotic formula \eqref{eq.var.underestimation}.}
\end{table}

% \appendix

%\bibliographystyle{abbrv}
%\bibliography{lit}

\end{document}